\theoremstyle{plain}
\newtheorem{theorem}{Theorem}
\newtheorem{lemma}[theorem]{Lemma}
\newtheorem{proposition}[theorem]{Proposition}
\newtheorem{observation}[theorem]{Observation}
\theoremstyle{definition}
\newtheorem{definition}[theorem]{Definition}
\newtheorem{algo}[theorem]{Algorithm}
\newcommand{\bigO}{\mathcal{O}}
\newcommand{\MSOone}{\text{$\textbf{MSO}_1$}}
\newcommand{\MSOtwo}{\text{$\textbf{MSO}_2$}}
\DeclareMathOperator{\operatorClassNP}{NP}
\newcommand{\classNP}{\ensuremath{\operatorClassNP}}
\newcommand{\cwd}{\textsc{cwd}}
\newcommand{\PCG}{\textsc{Partial Complement  to  $\mathcal{G}$}\xspace}
\newcommand{\PCGshort}{\textsc{PC$\mathcal{G}$}\xspace}
\newcommand{\KRR}{\textsc{Clique in}\text{ $r$}\textsc{-regular Graph}}
\newcommand{\KRRshort}{\textsc{K\text{$r$}R}}
\newcommand{\PCRR}{\textsc{Partial Complement  to $r$-Regular}\xspace}
\newcommand{\PCRRshort}{\textsc{PC}$r$\textsc{R}}
\newcommand{\PCTFshort}{\PCG}
\newcommand{\ie}{i.\,e.\@\xspace}
\newcommand{\eg}{e.\,g.\@\xspace}
\newcommandx{\defsimpleproblem}[3][]{
  \vspace{3mm}
  \noindent\fbox{
    \begin{minipage}{0.96\textwidth}
      \begin{tabular*}{\textwidth}{@{\extracolsep{\fill}}lr} {#1} \\ 
      \end{tabular*}
      {\textbf{Input:}} {#2}  \\
      {\textbf{Question:}} {#3}
    \end{minipage}
  }
  \vspace{4mm}
}
\title{Partial complementation of graphs\thanks{The first three authors have been supported by the Research Council of Norway via the projects ``CLASSIS'' and ``MULTIVAL". The fourth author has been supported by project ``DEMOGRAPH" (ANR-16-CE40-0028).}}
\author{%
Fedor V. Fomin\thanks{Department of Informatics, University of Bergen, Norway, \texttt{\{fedor.fomin, petr.golovach,torstein.stromme\}@ii.uib.no}}
\and 
\addtocounter{footnote}{-1}
Petr A. Golovach\footnotemark{}
\and 
\addtocounter{footnote}{-1}
Torstein J.\,F. Strømme \footnotemark{}
\and 
Dimitrios M. Thilikos \thanks{AlGCo project, CNRS, LIRMM, France, \texttt{sedthilk@thilikos.info}}~\thanks{Department of Mathematics National and Kapodistrian University of Athens, Greece.}
}
\begin{document}
\maketitle

\begin{abstract}
A \emph{partial complement} of the graph $G$ is a graph obtained from $G$ by complementing all the edges in one of its induced subgraphs. We study the following algorithmic question: for a given graph $G$ and graph class $\mathcal{G}$, is there a partial complement of $G$ which is in $\mathcal{G}$? We show that this problem can be solved in polynomial time for various choices of the graphs class $\mathcal{G}$, such as bipartite, degenerate, or cographs. We complement these results by proving that the problem is \classNP-complete when $\mathcal{G}$ is the class of $r$-regular graphs.
\end{abstract}

\section{Introduction}

One of the most important questions in graph theory concerns the efficiency of recognition of a graph class $\mathcal{G}$. For example, how fast we can decide whether a graph is chordal,
2-connected,
triangle-free,
of bounded treewidth,
bipartite,
$3$-colorable,
or excludes some fixed graph as a minor? In particular, the recent developments in parameterized algorithms are driven by the problems of recognizing of graph classes which do not differ up to a ``small disturbance'' from graph classes recognizable in polynomial time. The amount of disturbance is quantified in ``atomic'' operations required for modifying an input graph into the ``well-behaving'' graph class $\mathcal{G}$. The standard operations could be edge/vertex deletions, additions or edge contractions. 
Many problems in graph algorithms fall into this graph modification category: is it possible to add at most $k$ edges to make a graph $2$-edge connected or to make it chordal? Or is it possible to delete at most $k$ vertices such that the resulting graph has no edges or contains no cycles? 

%
%
%
%
%
%
%
%

A rich subclass of modification problems concerns edge editing problems. Here the ``atomic'' operation is the change of adjacency, \ie{} for a pair of vertices $u,v$, we can either add an edge $uv$ or delete the edge $uv$. For example, the \textsc{Cluster Editing} problem asks to transform an input graph into a cluster graph, that is a disjoint union of cliques, by flipping at most $k$ adjacency relations. 

Besides the basic edge editing, it is natural to consider problems where the set of removed and added edges should satisfy some structural constraints.  In particular, such problems were considered for \emph{complementation} problems. Recall that the \emph{complement} of a graph $G$ is a graph $H$ on the same vertices such that two distinct vertices of $H$ are adjacent if and only if they are not adjacent in $G$.
Seidel (see~\cite{Seidel74,Seidel76,Seidel81}) introduced the operation that is now known as the \emph{Seidel switch}. For a vertex $v$ of a graph $G$, this operation complements the adjacencies of $v$, that is, it removes the edges incident to $v$ and makes $v$ adjacent to the non-neighbors of $v$ in $G$. Respectively, for a set of vertices $U$, the Seidel switching, that is, the consecutive switching for the vertices of $U$, complements the adjacencies between $U$ and its complement $V(G)\setminus U$. 
The study of the algorithmic question whether it is possible to obtain a graph from a given graph class by the Seidel switch was initiated by  Ehrenfeucht et al.~\cite{EhrenfeuchtHHR98}.  
Further results were established in~\cite{JelinekJK16,JelinkovaK14,JelinkovaSHK11,KratochvilNZ92,Kratochvil03}.
Another important operation of this type is the \emph{local complementation}.  For a vertex $v$ of a graph $G$, the \emph{local complementation of $G$ at $v$} is the graph obtained from $G$ by replacing $G[N(v)]$ by its complement.  This operation plays crucial role in the definition of \emph{vertex-minors}~\cite{Oum05} and was investigated in this contest (see, e.g.~\cite{CourcelleO07,Oum17}). See also~\cite{Bouchet93,KAMINSKI20092747} for some algorithmic results concerning local complementations.

In this paper we study the \emph{partial complement} of a graph, which was introduced by Kami{\'n}ski, Lozin, and Milani{\v c} in \cite{KAMINSKI20092747} in their study of the clique-width of a graph. 
A \emph{partial complement} of a graph $G$ is a graph obtained from $G$ by complementing all the edges of one of its induced subgraphs.
 More formally, for a graph $G$ and $S\subseteq V(G)$, we define 
$G \oplus S$ as the graph with the vertex set $V(G)$ whose edge set is defined as follows: 
a pair of distinct vertices $u,v$ is an edge of $G \oplus S$ if and only if one of the following holds: 
\begin{itemize}
    \item $uv \in E(G) \land (u \notin S \lor v \notin S)$, or
    \item $uv \notin E(G) \land u \in S \land v \in S$. 
\end{itemize}
Thus when the set $S$ consists only of two vertices $\{u, v\}$, then the operation changes the adjacency between $u$ and $v$, and for a larger set $S$, 
$G \oplus S$ changes the adjacency relations for all pairs of vertices of $S$.

We say that a graph $H$ is a partial complement of the graph $G$ if $H$ is isomorphic to $G \oplus S$ for some $S\subseteq V(G)$. 
For a graph class $\mathcal{G}$ and a graph $G$, we say that there is a \emph{partial complement of $G$ to $\mathcal{G}$} if for some $S\subseteq V(G)$, we have $G \oplus S \in \mathcal{G}$. We denote by $\mathcal{G}^{(1)}$ the class of graphs such that its members can be partially complemented to $\mathcal{G}$.


Let $\mathcal{G}$ be a graph class. We consider the following generic algorithmic problem.

\defsimpleproblem{\PCG (\PCGshort{})}%
{A simple undirected graph $G$.}%
{Is there a partial complement of $G$ to $\mathcal{G}$?}

In other words, how difficult is it to recognize the class $\mathcal{G}^{(1)}$?
In this paper we show that  there are many well-known graph classes $\mathcal{G} $ such that 
$\mathcal{G}^{(1)}$ is recognizable in polynomial time. We show  that
\begin{itemize}
    \item \PCG is solvable in time $\bigO(f(n)\cdot n^4 + n^6)$ when $\mathcal{G}$ is a triangle-free graph class recognizable in time $f(n)$. For example, this implies that when  $\mathcal{G}$ is the class of bipartite graphs, the class $\mathcal{G}^{(1)}$ is recognizable in polynomial time.
    \item \PCG is solvable in time $f(n)\cdot n^{\bigO(1)}$ when $\mathcal{G}$ is a $d$-degenerate graph class recognizable in time $f(n)$.
    Thus when  $\mathcal{G}$ is the class of planar graphs, class of cubic graphs, class of graph of bounded treewidth, or class of $H$-minor free graphs, then the class $\mathcal{G}^{(1)}$ is recognizable in polynomial time.
    \item \PCG is solvable in polynomial time when $\mathcal{G}$ is a class of bounded clique-width  expressible in monadic second-order logic (with no edge set quantification). In particular, if $\mathcal{G}$ is the class of $P_4$-free graphs (cographs), then $\mathcal{G}^{(1)}$ is recognizable in polynomial time.
    \item \PCG is solvable in polynomial time when $\mathcal{G}$ can be described by a $2 \times 2$ $M$-partition matrix. Therefore $\mathcal{G}^{(1)}$ is recognizable in polynomial time when $\mathcal{G}$ is the class of split graphs, as they can be described by such a matrix. 
\end{itemize}

Nevertheless, there are cases when the problem is \classNP-hard. In particular, we prove that this holds  when $\mathcal{G}$ is the class of $r$-regular graphs.



\section{Partial complementation to triangle-free graph classes}\label{sec:triangle_free}

A triangle is a complete graph on three vertices. Many graph classes does not allow the triangle as a subgraph, for instance trees, forests, or graphs with large girth. In this paper 
we show that partial complementation to triangle-free graphs can be decided in polynomial time. 


More precisely, we show that if a graph class ${\cal G}$ can be recognized in polynomial time and it is triangle-free, then we can also solve \PCG{} in polynomial time.



 Our algorithm is constructive, and returns a \emph{solution} $S \subseteq V(G)$, that is a set $S$ such that $G\oplus S$ is in $\mathcal{G}$. We say that a solution \emph{hits} an edge $uv$ (or a non-edge $\overline{uv})$, if both $u$ and $v$ are contained in $S$.

Our algorithm considers each of the following cases. 
\begin{enumerate}
\item[($i$)]   There is a solution $S$ of size at most two. 
\item[($ii$)]  There is a solution $S$ containing two vertices that are non-adjacent in $G$.
\item[($iii$)] There is a solution $S$ such that it form a clique of size at least $3$ in $G$.
\item[($iv$)]   $G$ is a no-instance. 
\end{enumerate}

Case~($i$) can be resolved in polynomial time by brute-force, and thus we start from analyzing the structure of a solution in Case~($ii$). We need the following observation. 

\begin{observation} \label{tf:obs:small-independentset-in-image}
Let $  \mathcal{G}$ be a class of triangle-free graphs and let $G$ be an instance of \PCTFshort, where $S \subseteq V(G)$ is a valid solution. Then 
    \begin{enumerate}[a)]
        \item $G[S]$ does not contain an independent set of size 3, and
        \item for every triangle $\{u, v, w\} \subseteq V(G)$, at least two vertices are in $S$.
    \end{enumerate}
\end{observation}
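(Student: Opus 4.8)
The plan is to argue both parts directly from the definition of $G \oplus S$, exploiting the single fact that $G \oplus S$ is triangle-free (which is all we know from $S$ being a valid solution, since $\mathcal{G}$ consists of triangle-free graphs). The key observation to keep in front of us is the \emph{flip rule}: a pair $\{x,y\}$ changes its adjacency status between $G$ and $G \oplus S$ precisely when \emph{both} endpoints lie in $S$; if at least one endpoint is outside $S$, the adjacency is preserved. Both items will follow by exhibiting a triangle in $G \oplus S$ under the negation of the claim, contradicting triangle-freeness of the image.

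For part a), I would argue by contradiction. Suppose $G[S]$ contains an independent set $\{a,b,c\}$ of size three. Since $a,b,c$ are pairwise non-adjacent in $G$ and all three lie in $S$, each of the three pairs has both endpoints in $S$, so each pair is flipped; hence $a,b,c$ become pairwise \emph{adjacent} in $G \oplus S$. This makes $\{a,b,c\}$ a triangle in $G \oplus S$, contradicting that $G \oplus S$ is triangle-free.

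For part b), I would again argue by contradiction. Suppose $\{u,v,w\}$ is a triangle of $G$ but at most one of its vertices belongs to $S$. Then at least two of the three vertices lie outside $S$, so every one of the three pairs $\{u,v\},\{v,w\},\{u,w\}$ has at least one endpoint outside $S$ — indeed, if only one vertex is in $S$, no pair can have both endpoints in $S$. By the flip rule none of these three edges is flipped, so all three survive and $\{u,v,w\}$ remains a triangle in $G \oplus S$, again contradicting triangle-freeness.

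There is no genuine obstacle here; the whole argument rests on correctly tracking which pairs are complemented, so the only thing to be careful about is the bookkeeping of the flip rule in each case (in part a) all three pairs are flipped, in part b) none of them is). Both contradictions reduce to the single fact that the image graph contains no triangle, so once the flip rule is stated cleanly the two items are immediate.
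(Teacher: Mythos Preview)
Your argument is correct and matches the paper's own justification, which is a one-sentence remark that non-edges inside $S$ become edges (and vice versa) while pairs with an endpoint outside $S$ are untouched. You have simply spelled out the two contradictions explicitly; there is nothing different in substance.
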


\noindent Because all non-edges between vertices in $G[S]$ become edges in $G \oplus S$ and vice versa, whereas all (non-) edges with an endpoint outside $S$ remain untouched, we see that the observation holds.

Let us recall that a graph $G$ is a \emph{split graph} if its vertex set can be partitioned into $V(G)=C\cup I$, where $C$ is a clique and $I$ is an independent set.  Let us note that the vertex set of a split graph can have several \emph{split partitions}, i.e. partitions into a clique and independent set. However, the number of split partitions of an $n$-vertex split graphs is at most $n$.
The analysis of Case~($ii$) is based on the following lemma. 

\begin{lemma} \label{tf:lem:nonclique-solution-properties}
    Let $  \mathcal{G}$ be a class of triangle-free graphs and let $G$ be an instance of \PCTFshort{}. Let $S \subseteq V(G)$ be a valid solution which is not a clique, and let $u, v \in S$ be distinct vertices such that $uv \notin E(G)$. Then
    \begin{enumerate}[a)]
        \item the entire solution $S$ is a subset of the union of the closed neighborhoods of $u$ and $v$,  that is $S \subseteq N_G[u] \cup N_G[v]$;
        \item every common neighbor of $u$ and $v$ must be contained in the solution $S$, that is $N_G(u) \cap N_G(v) \subseteq S$;
        \item the graph $ G[N(u) \setminus N(v)]$ is a split graph. Moreover, $(N(u) \setminus N(v))\cap S$ is a clique  and  $(N(u) \setminus N(v))\setminus S$ is an independent set.
    \end{enumerate}
\end{lemma}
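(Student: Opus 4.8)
The plan is to reduce every claim to the two basic facts recorded in Observation~\ref{tf:obs:small-independentset-in-image} together with the defining behaviour of $\oplus$: inside $S$ every adjacency is flipped, while every pair with an endpoint outside $S$ is left untouched, and, since $S$ is a valid solution and $\mathcal{G}$ is triangle-free, $G\oplus S$ itself is triangle-free. In these terms, a triple that is independent in $G$ and entirely inside $S$ turns into a triangle of $G\oplus S$ (this is part a) of the observation), and a triangle of $G$ with at most one vertex in $S$ survives as a triangle of $G\oplus S$ (part b)). I would keep both formulations on hand and, for each assertion, simply exhibit the forbidden configuration a putative counterexample would create.

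Part a) is then immediate: if some $w\in S$ lay outside $N_G[u]\cup N_G[v]$, then $wu,wv\notin E(G)$, and together with the hypothesis $uv\notin E(G)$ the triple $\{u,v,w\}$ would be an independent set of size three inside $G[S]$, contradicting Observation~\ref{tf:obs:small-independentset-in-image}a). For part b), take $w\in N(u)\cap N(v)$ and suppose $w\notin S$; since $u,v\in S$ and $uv\notin E(G)$, the pair $uv$ becomes an edge of $G\oplus S$, whereas $uw$ and $vw$ are edges of $G$ left untouched (because $w\notin S$), so $\{u,v,w\}$ would be a triangle of the triangle-free graph $G\oplus S$, which is impossible.

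For part c) I would set $A=N(u)\setminus N(v)$, observe that neither $u$ nor $v$ belongs to $A$, and partition $A$ into $A\cap S$ and $A\setminus S$; showing that these are respectively a clique and an independent set yields the claimed split partition, and in particular that $G[A]$ is a split graph. The two halves use $u$ and $v$ asymmetrically, which is the natural point to exploit: every vertex of $A$ is adjacent to $u$ and non-adjacent to $v$. If two vertices $x,y\in A\cap S$ were non-adjacent, then $\{v,x,y\}$ would be pairwise non-adjacent (as $x,y\notin N(v)$) and wholly inside $S$, again contradicting Observation~\ref{tf:obs:small-independentset-in-image}a); hence $A\cap S$ is a clique. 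If two vertices $x,y\in A\setminus S$ were adjacent, then $\{u,x,y\}$ would be a triangle of $G$ with at most one vertex ($u$) in $S$, contradicting Observation~\ref{tf:obs:small-independentset-in-image}b); hence $A\setminus S$ is independent.

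The statement is not genuinely hard once the observation is in place, so the main obstacle is really bookkeeping rather than ideas: one must apply the right flavour of the observation to the right auxiliary vertex---using $v$ and the ``no independent triple'' fact for the clique side, and $u$ and the ``triangle needs two vertices'' fact for the independent side---and be careful about closed versus open neighbourhoods so that $u$ and $v$ themselves are correctly excluded from $A=N(u)\setminus N(v)$.
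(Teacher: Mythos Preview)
Your proof is correct and follows essentially the same route as the paper's own argument: each part is reduced to the two items of Observation~\ref{tf:obs:small-independentset-in-image}, with the same auxiliary triples $\{u,v,w\}$ for a) and b), and with $v$ used for the clique side and $u$ for the independent side in c). The only cosmetic difference is that the paper phrases the independent-set half of c) as ``$S$ is a vertex cover of $G[N(u)\setminus N(v)]$'' and argues the surviving triangle in $G\oplus S$ directly, whereas you invoke Observation~\ref{tf:obs:small-independentset-in-image}b); these are equivalent.
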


\begin{proof}
    We will prove each point separately, and in order.
    \begin{enumerate}[a)]
        \item Assume for the sake of contradiction that the solution $S$ contains a vertex $w \notin N_G[u] \cup N_G[v]$. But then $\{u, v, w\}$ is an independent set in $G$, which contradicts item a) of  Observation~\ref{tf:obs:small-independentset-in-image}.
        \item Assume for the sake of contradiction that the solution $S$ does not contain a vertex $w \in N_G(u) \cap N_G(v)$. Then the edges $uw$ and $vw$ will both be present in $G \oplus S$, as well as the edge $uv$. Together, these forms a triangle.
        \item We first claim that the solution $S$ is a vertex cover for $G[N(u) \setminus N(v)]$. If it was not, then there would exist an edge $u_1u_2$  of $G[N(u)\setminus N(v)]$ such that both  endpoints   $u_1, u_2\not\in S$, yet  $u_1, u_2$   would form a triangle with $u$ in $G\oplus S$, which would be a contradiction. Hence $(N(u) \setminus N(v))\setminus S$ is an independent set.
        Secondly, we claim that  $(N(u) \setminus N(v))\cap S$  forms a clique. If not, then there would exist
 $u_1, u_2 \in (N(u) \setminus N(v))\cap S$ which are nonadjacent. 
  In this case  $\{u_1, u_2, v\}$ is an independent set, which contradicts  item a) of Observation~\ref{tf:obs:small-independentset-in-image}. Taken together, these claims imply the last item of the lemma.
    \end{enumerate}
\end{proof}

\noindent We now move on to examine the structure of a solution for the third case, when there exists a solution which is a clique of size at least three.

\begin{lemma} \label{tf:lem:clique-solution-properties}
    Let $  \mathcal{G}$ be a class of triangle-free graphs and let $G$ 
  be an instance of \PCTFshort. Let $S \subseteq V(G)$ be a solution such that $|S| \geq 3$ and $G[S]$ is a clique. Let $u, v \in S$ be distinct. Then
    \begin{enumerate}[a)]
        \item the solution $S$  is contained in their common neighborhood, that is $S \subseteq N_G[u] \cap N_G[v]$, and
        \item the graph $G[N_G[u] \cap N_G[v]]$ is a split graph where $(N_G[u] \cap N_G[v])\setminus S$ is an independent set. 
    \end{enumerate}
\end{lemma}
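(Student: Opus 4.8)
The plan is to exploit two elementary consequences of the definition of $G\oplus S$. First, since $G[S]$ is a clique, complementing the adjacencies inside $S$ turns $S$ into an \emph{independent} set of $G\oplus S$. Second, any pair of vertices with at least one endpoint outside $S$ keeps its adjacency, so every edge of $G$ that touches $V(G)\setminus S$ persists in $G\oplus S$. With these two facts, part a) is almost immediate and part b) reduces to excluding a single bad configuration.

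For part a), I would simply observe that $S$ is a clique of $G$ containing both $u$ and $v$. Hence every $w\in S$ is either equal to or adjacent to $u$, and likewise either equal to or adjacent to $v$, so $w\in N_G[u]\cap N_G[v]$. This yields $S\subseteq N_G[u]\cap N_G[v]$ with no further work.

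For part b), write $W=N_G[u]\cap N_G[v]$ and propose the split partition $(S, W\setminus S)$. By part a) we have $S\subseteq W$, so $W\cap S=S$, which is a clique by hypothesis; it therefore only remains to show that $W\setminus S$ is independent in $G$. I would argue by contradiction: suppose $u_1u_2\in E(G)$ with $u_1,u_2\in W\setminus S$. Since $u_1,u_2\in N_G[u]$ but $u_1,u_2\neq u$ (because $u\in S$ while $u_1,u_2\notin S$), both $u_1$ and $u_2$ are adjacent to $u$ in $G$, so $\{u,u_1,u_2\}$ is a triangle of $G$. As $u_1,u_2\notin S$ and $u\in S$, exactly one of its three vertices lies in $S$, contradicting item b) of Observation~\ref{tf:obs:small-independentset-in-image}. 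Equivalently, all three edges $uu_1$, $uu_2$, $u_1u_2$ each have an endpoint outside $S$ and hence survive into $G\oplus S$, producing a triangle in the triangle-free graph $G\oplus S$. Either way $W\setminus S$ is independent, so $(S,W\setminus S)$ witnesses that $G[W]$ is a split graph.

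The argument is short, and I do not anticipate a genuine obstacle. The only point requiring a little care is the bookkeeping in part b): one must check that the two endpoints of the offending edge are distinct from $u$ and truly adjacent to $u$ in $G$, so that the triangle built on $\{u,u_1,u_2\}$ really lies in $G$ (equivalently, survives in $G\oplus S$). Once that is verified, the conclusion follows directly.
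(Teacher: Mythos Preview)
Your proof is correct and follows essentially the same approach as the paper: part~a) uses that $S$ is a clique containing $u$ and $v$, and part~b) shows $(N_G[u]\cap N_G[v])\setminus S$ is independent by exhibiting the triangle $\{u,u_1,u_2\}$ and invoking Observation~\ref{tf:obs:small-independentset-in-image}\,b). The paper phrases~b) as ``$S$ is a vertex cover of $G[N_G[u]\cap N_G[v]]$'', which is just the contrapositive of your formulation.
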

\begin{proof}
    We prove each point separately, and in order.
    \begin{enumerate}[a)]
        \item Assume for the sake of contradiction that the solution $S$ contains a vertex $w$ which is not in the neighborhood of both $u$ and $v$. This contradicts that $S$ is a clique.
        \item We claim that $S$ is a vertex cover of $G[N_G[u] \cap N_G[v]]$. Because $S$ is also a clique, the statement of the lemma will then follow immediately. Assume for the sake of contradiction that $S$ is not a vertex cover. Then there exist an uncovered edge $w_1w_2$,  where $w_1, w_2 \in N_G[u] \cap N_G[v]$, and also $w_1, w_2 \notin S$. Since $\{u, w_1, w_2\}$ form a triangle, we have by b) of Observation~\ref{tf:obs:small-independentset-in-image} that at least two of these vertices are in $S$. That is a contradiction, so our claim holds. %
    \end{enumerate} %
\end{proof}
\noindent
We now have everything in place to present the algorithm.
\begin{algo}[{\PCG{} where $\mathcal{G}$ is triangle-free}] \label{tf:algo}
$ $ \newline{}
Input: An instance $G$ of \PCGshort{} where $\mathcal{G}$ is a triangle-free graph class recognizable in time $f(n)$ for some function $f$.
\newline{}
Output: A set $S \subseteq V(G)$ such that $G \oplus S$ is in $\mathcal{G}$, or a correct report that no such set exists.
\begin{enumerate}
    \item By brute force, check if there is a solution of size at most 2. If yes, return this solution.
    \item For every non-edge $\overline{uv}$ of $G$:
    \begin{enumerate}
        \item If either $G[N(u) \setminus N_G(v)]$ or $G[N_G(u) \setminus N_G(v)]$ is not a split graph, skip this iteration and try the next non-edge.
        \item Let $(I_u, C_u)$ and $(I_v, C_v)$ denote a split partition of $G[N_G(u)\setminus N_G(v)]$ and $G[N_G(v)\setminus N_G(u)]$ respectively. For each pair of split partitions $(I_u, C_u), (I_v, C_v)$:
        \begin{enumerate}
            \item Construct solution candidate $S' := \{u, v\} \cup (N_G(u) \cap N_G(v)) \cup C_u \cup C_v$
            \item If $G \oplus S'$ is a member of $\mathcal{G}$, return $S'$
        \end{enumerate}
    \end{enumerate}
    \item Find a triangle $\{x, y, z\}$ of $G$
    \item For each edge in the triangle $uv \in \{xy, xz, yz\}$:
    \begin{enumerate}
        \item If $G[N_G(u) \cap N_G(v)]$ is not a split graph, skip this iteration and try the next edge.
        \item For each possible split partition $(I, C)$ of $G[N_G(u) \cap N_G(v)]$:
        \begin{enumerate}
            \item Construct solution candidate $S' := \{u, v\} \cup C$
            \item If $G \oplus S'$ is a member of $\mathcal{G}$, return $S'$
        \end{enumerate}
    \end{enumerate}
    \item Return `\textsc{None}'
\end{enumerate}
\end{algo}

\begin{theorem}
Let $\mathcal{G}$ be a class of triangle-free graphs such that deciding whether an $n$-vertex graph is in $\mathcal{G}$ is solvable in time $f(n)$ for some function $f$. Then 
\PCG is solvable 
  in time $\bigO(n^6 + n^4 \cdot f(n))$.
\end{theorem}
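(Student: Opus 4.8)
The plan is to prove two things about Algorithm~\ref{tf:algo}: that it is correct (it returns a valid set exactly when a solution exists) and that it runs within the claimed bound. Soundness is the easy direction and I would dispose of it first: every candidate $S'$ is returned only after the explicit membership test $G \oplus S' \in \mathcal{G}$ is passed (inside steps 1, 2, and 4), and the algorithm outputs \textsc{None} only when no candidate passes; hence any reported set is a genuine solution and a \textsc{None} answer is never a false negative provided completeness holds.

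The substance of the correctness argument is completeness, and here I would reduce to the four cases~($i$)--($iv$), which are exhaustive for an arbitrary solution $S$: if $|S|\le 2$ we are in case~($i$); otherwise $|S|\ge 3$ and either $G[S]$ is a clique (case~($iii$)) or $S$ contains two non-adjacent vertices (case~($ii$)). Case~($i$) is settled by the brute force in step~1. For case~($ii$) I fix non-adjacent $u,v\in S$ and invoke Lemma~\ref{tf:lem:nonclique-solution-properties}, which forces $S=\{u,v\}\cup(N_G(u)\cap N_G(v))\cup C_u\cup C_v$, where $C_u=(N(u)\setminus N(v))\cap S$ and $C_v=(N(v)\setminus N(u))\cap S$ are the clique sides of split partitions of the split graphs $G[N(u)\setminus N(v)]$ and $G[N(v)\setminus N(u)]$; since step~2 ranges over every non-edge and, for the relevant one, over every pair of split partitions, it builds exactly this $S'=S$ and confirms it. The clique case is where I expect the one genuinely delicate point: the triangle $\{x,y,z\}$ discovered in step~3 need not lie inside $S$, so before applying Lemma~\ref{tf:lem:clique-solution-properties} I must use part b) of Observation~\ref{tf:obs:small-independentset-in-image} to guarantee that at least two of $x,y,z$ lie in $S$, hence some edge $uv$ of the triangle has both endpoints in $S$. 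For that edge the lemma identifies $S\setminus\{u,v\}\subseteq N_G(u)\cap N_G(v)$ as the clique side of a split partition of $G[N_G(u)\cap N_G(v)]$ with independent side $(N_G(u)\cap N_G(v))\setminus S$, so step~4 reconstructs $S'=\{u,v\}\cup C=S$. I would also verify the open/closed-neighborhood bookkeeping: because $u,v\in S$ are adjacent we have $N_G[u]\cap N_G[v]=(N_G(u)\cap N_G(v))\cup\{u,v\}$, so reinstating $u,v$ explicitly in $S'$ recovers the closed-neighborhood formulation of the lemma.

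For the running time I would account for each step separately. Step~1 inspects $\bigO(n^2)$ sets at cost $\bigO(n^2+f(n))$ each. Step~2 ranges over $\bigO(n^2)$ non-edges; for each, split recognition is polynomial and, since an $n$-vertex split graph has at most $n$ split partitions, there are $\bigO(n^2)$ pairs of partitions, each producing a single candidate that is built and tested in $\bigO(n^2+f(n))$ time. This yields the dominant term $\bigO\!\big(n^2\cdot n^2\cdot(n^2+f(n))\big)=\bigO(n^6+n^4 f(n))$, and the worst case (a clique, with $\Theta(n)$ split partitions) shows the pair count is essentially tight. Finding a triangle in step~3 costs $\bigO(n^3)$, and step~4 treats three edges with $\bigO(n)$ partitions each, contributing only lower-order terms. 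Summing gives $\bigO(n^6+n^4 f(n))$, as claimed. Thus the only real obstacle is the completeness argument for the clique case, specifically the use of the triangle Observation to pin down which edge of the discovered triangle the structural lemma must be applied to; everything else is routine.
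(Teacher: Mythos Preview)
Your proposal is correct and follows essentially the same approach as the paper's own proof: case split into ($i$)--($iv$), brute force for small solutions, Lemma~\ref{tf:lem:nonclique-solution-properties} plus enumeration of split-partition pairs for the non-clique case, and Observation~\ref{tf:obs:small-independentset-in-image}b together with Lemma~\ref{tf:lem:clique-solution-properties} for the clique case, with the same running-time accounting dominated by step~2. If anything, you are slightly more careful than the paper in separating soundness from completeness and in spelling out the open/closed-neighborhood bookkeeping in step~4.
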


\begin{proof}
    We will prove that Algorithm~\ref{tf:algo} is correct, and that its running time is $\bigO(n^4 \cdot (n^2 + f(n)))$. We begin by proving correctness. Step 1 is trivially correct. After Step 1 we can assume that any valid solution has size at least three, and we have handled Case~($i$) when there exists a solution of size at most two. We have the three cases left to consider: ($ii$) There exists a solution which hits a non-edge, ($iii$) 
    there is a solution $S$ such that in $G\oplus S$ vertices of $S$ form a clique of size at least $3$, and
     ($iv$) no solution exists.

    In the case that there exists a solution $S$ hitting a non-edge $uv$, we will at some point guess this non-edge in Step 2 of the algorithm. By Lemma~\ref{tf:lem:nonclique-solution-properties}, we have that both $G[N_G(u) \setminus N_G(v)]$ and $G[N_G(u) \setminus N_G(v)]$ are split graphs, so we do not miss the solution $S$ in Step 2a. Since we try every possible combinations of split partitions in Step 2b, we will by Lemma~\ref{tf:lem:nonclique-solution-properties} at some point construct $S'$ correctly such that $S' = S$.

    In the case that there exist only solutions which hits exactly a clique, we first find some triangle $\{x, y, z\}$ of $G$. It must exist, since a solution $S$ is a clique of size at least three. By Observation~\ref{tf:obs:small-independentset-in-image}b, at least two vertices of the triangle must be in the $S$. At some point in step 4 we guess these vertices correctly. By Lemma~\ref{tf:lem:clique-solution-properties}b we know that $G[N_G(u) \cap N_G(v)]$ is a split graph, so we will not miss $S$ in Step 4a. Since we try every split partition in Step 4b, we will by Lemma~\ref{tf:lem:clique-solution-properties} at some point construct $S'$ correctly such that $S' = S$.

    Lastly, in the case that there is no solution, we know that there neither exists a solution of size at most two, nor a solution which hits a non-edge, nor a solution which hits a clique of size at least three. Since these three cases exhaust the possibilities, we can correctly report that there is no solution when none was found in the previous steps.

    For the runtime, we start by observing that Step 1 takes time $\bigO(n^2 \cdot f(n))$. The sub-procedure of Step 2 is performed $\bigO(n^2)$ times, where step 2a takes time $\bigO(n \log n)$. The sub-procedure of Step 2b takes time at most $\bigO(n^2 + f(n))$, and it is performed at most $\bigO(n^2)$ times. In total, Step 2 will use no longer than $\bigO(n^4 \cdot (n^2 + f(n)))$ time. Step 3 is trivially done in time $\bigO(n^3)$. The sub-procedure of Step 4 is performed at most three times. Step 4a is done in $\bigO(n \log n)$ time, and step 4b is done in $\bigO(n \cdot (n^2 + f(n))$ time, which also becomes the asymptotic runtime of the entire step 4. The worst running time among these steps is Step 2, and as such the runtime of Algorithm~\ref{tf:algo} is $\bigO(n^4 \cdot (n^2 + f(n)))$.
\end{proof} 



\section{Complement  to degenerate graphs}\label{sec:degenerated}

 For $d>0$, we say that a graph $G$ is $d$-degenerate, if every induced (not necessarily proper) subgraph of $G$ has a vertex of degree at most $d$. For example, trees are $1$-degenerate, while planar graphs are $5$-degenerate.  
 
\begin{theorem} \label{thm:ddeg}
Let $\mathcal{G}$ be a  class of $d$-degenerate graphs such that deciding 
 whether an $n$-vertex graph is in $\mathcal{G}$  is solvable in time $f(n)$ for some function $f$. 
 Then 
\PCG is solvable in time $f(n) \cdot n^{2^{\bigO(d)}}$. 
 
\end{theorem}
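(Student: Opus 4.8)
The plan is to exploit $d$-degeneracy as a strong necessary condition on any solution $S$, and then to reduce the search to a family of candidate sets that is polynomially bounded for fixed $d$ and is tested against the membership oracle for $\mathcal{G}$.

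First I would record the structure forced on a solution. Suppose $G \oplus S \in \mathcal{G}$. Since $(G\oplus S)[S] = \overline{G[S]}$ and induced subgraphs of $d$-degenerate graphs are $d$-degenerate, $\overline{G[S]}$ is $d$-degenerate. Two consequences are immediate and will drive everything: $(i)$ $d$-degenerate graphs are $(d+1)$-colorable and $K_{d+2}$-free, so $G[S]$ is a union of at most $d+1$ cliques and has no independent set of size $d+2$; and $(ii)$ $G\oplus S$ itself is $K_{d+2}$-free, so for every clique $K$ of $G$ at most $d+1$ of its vertices can lie outside $S$, because any two vertices of $K$ outside $S$ remain adjacent in $G \oplus S$ (only pairs with both endpoints in $S$ are flipped). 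Thus $S$ must contain almost all of every large clique of $G$, while $G[S]$ is co-sparse.

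Next I would set up the algorithm as guess-and-test: produce a family $\mathcal{F}$ of candidate sets, and for each $S \in \mathcal{F}$ build $G \oplus S$ in time $\bigO(n^2)$ and call the recognition oracle in time $f(n)$, returning $S$ if $G \oplus S \in \mathcal{G}$. The total running time is then $|\mathcal{F}| \cdot (f(n) + \bigO(n^2))$, so it remains to construct $\mathcal{F}$ with $|\mathcal{F}| = n^{2^{\bigO(d)}}$ that is guaranteed to contain a valid solution whenever one exists. Here the crux is a normalization-and-core lemma: I would show that, for a suitably chosen valid solution, the vertices of $G$ can be grouped into a bounded number of classes with respect to a guessed \emph{core} $W \subseteq V(G)$ of size $2^{\bigO(d)}$, the Ramsey-type bound arising from controlling the adjacency patterns among the at most $d+1$ cliques covering $G[S]$ and their interaction with the outside, so that membership in $S$ is determined canonically from adjacency to $W$ together with a monotone choice inside each class. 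Enumerating all choices of $W$ and of the finitely many class-rules then yields $\mathcal{F}$ of the claimed size, and reconstructing $S$ from each choice is polynomial.

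I expect this core lemma to be the main obstacle, for two reasons. First, neither $S$ nor $V(G)\setminus S$ need be small: if $G$ is the disjoint union of a clique and an independent set each of size $n/2$, then taking $S$ to be the clique turns $G\oplus S$ into the edgeless graph, and both $S$ and $V(G)\setminus S$ have size $\Theta(n)$. Hence the bound cannot come from enumerating a small side; it must come from the genuinely structural fact that co-sparseness together with $K_{d+2}$-freeness limits a solution to $2^{\bigO(d)}$ essentially-different ``coordinates''. Second, because $\mathcal{G}$ is available only through an opaque recognition oracle, distinct sets $S$ produce distinct graphs $G\oplus S$, so we cannot perturb a solution and hope it remains in $\mathcal{G}$; the enumeration must reproduce some valid $S$ \emph{exactly}. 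The lemma therefore has to establish that at least one valid solution is fully described by a $2^{\bigO(d)}$-size core using only the $d$-degeneracy structure, independently of $\mathcal{G}$. Verifying this, and pinning down the exact exponent, is where the real work lies; the preliminary structural observations and the oracle-based testing are routine by comparison.
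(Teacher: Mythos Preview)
Your high-level framework is right and in fact matches the paper: guess a small ``core'' inside $S$, classify the remaining vertices by their adjacency to that core, argue that membership in $S$ is determined up to a bounded number of exceptions, and then enumerate. But you have not supplied the actual mechanism that makes this work, and the mechanisms you do hint at (Ramsey-type bounds, the $(d{+}1)$-clique cover of $G[S]$) are detours. The paper's argument is much more direct, and it is worth seeing why.

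The paper guesses a core $X\subseteq S$ of size only $2d+1$ (not $2^{\bigO(d)}$). Split $V(G)\setminus X$ into $Y$, the vertices with at least $d+1$ neighbours in $X$, and $Z$, the rest. If $y\in Y\setminus S$, all its edges into $X$ survive in $G\oplus S$; if $z\in Z\cap S$, all its non-edges into $X$ become edges in $G\oplus S$. In either case the vertex ends up with at least $d+1$ neighbours inside $X$ in $G\oplus S$. Since there are only $\binom{2d+1}{d+1}$ possible $(d{+}1)$-subsets of $X$, by pigeonhole more than $\alpha:=\binom{2d+1}{d+1}\cdot d$ such vertices would yield a $K_{d+1,d+1}$ in $G\oplus S$, contradicting $d$-degeneracy. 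Hence at most $\alpha$ vertices of $Y$ lie outside $S$ and at most $\alpha$ vertices of $Z$ lie inside $S$; guessing these exceptions costs $\binom{n}{\alpha}^2$ per choice of $X$, giving $n^{2^{\bigO(d)}}$ candidates in total. That single pigeonhole/$K_{d+1,d+1}$ observation \emph{is} the ``core lemma'' you left open.

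So the gap in your proposal is precisely the part you flagged: you never produce the argument that a small core determines $S$ up to $2^{\bigO(d)}$ exceptions. Your structural observations about $G[S]$ being covered by $d+1$ cliques and every large clique of $G$ being almost inside $S$ are correct but unnecessary; the proof needs only that $G\oplus S$ cannot contain $K_{d+1,d+1}$.
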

\begin{proof}Let $G$ be an $n$-vertex graph.  We are looking for a vertex subset  
 $S$ of $G$ such that $G\oplus S\in \mathcal{G}$. 
 
 We start from trying all vertex subsets of $G$ of size at most $2d$ as a candidate for $S$.
 Thus, in time $\bigO(n^{2d}\cdot f(n))$ we either find a solution or conclude that a solution, if it exists, should be of size more than  $2d$.
 
Now we assume that $|S|>2d$. We try all subsets of $V(G)$ of size $2d+1$. Then if $G$ can be complemented to $\mathcal{G}$, at least one of these sets, say $X$, is a subset of $S$. In total, we enumerate ${\binom{n}{2d+1}}$ sets. 

  First we consider the set $Y$ of  all vertices in $ V(G)\setminus X$ with at least $d+1$ neighbors in $X$. The observation here is that most vertices from $Y$ are in $S$. More precisely, if more than 
  \[
 \alpha= \binom{|X|}{d+1} \cdot d+1=  \binom{2d+1}{d+1} \cdot d+1  
  \]
  vertices of $Y$ are not in $S$, then $G\oplus S$ contains a complete bipartite graph $G_{d+1,d+1}$ as a subgraph, and hence $G\oplus S$ is not $d$-degenerate. Thus, we make at most 
 $
  \binom{n}{\alpha}
$
  guesses on which subset of $Y$ is in $S$. 
  
  Similarly, when we consider the set $Z$ of  all vertices from $ V(G)\setminus X$ with at most $d$ neighbors in $X$, we have that at most $\alpha$ of vertices from $Z$ could belong to $S$. Since $V(G)=X\cup Y\cup Z$, if there is a solution $S$, it will be found in at least one from  
  \[
  \binom{n}{2d+1} \cdot  \alpha^2=n^{2^{\bigO(d)}}
  \] 
  of the guesses.
  Since for each set $S$ we can check in time $f(n)$ whether  $G\oplus S\in \mathcal{G}$, this concludes the proof.
\end{proof}



\section{Complement to M-partition}

Many graph classes can be defined by whether it is possible to partition the vertices of graphs in the class such that certain internal and external edge requirements of the parts are met. For instance, a complete bipartite graph is one which can be partitioned into two sets such that every edge between the two sets is present (external requirement), and no edge exists within any of the partitions (internal requirements). Other examples are split graphs and $k$-colorable graphs. 
Feder et al.~\cite{feder2003list} formalized such partition properties of graph classes  by making use of a symmetric matrix over $\{0, 1, \star\}$, called an \emph{$M$-partition}. 

\begin{definition}[{$M$-partition}]
    For a $k \times k$ matrix $M$, we say that a graph $G$ belongs to the graph class $\mathcal{G}_M$ if its vertices can be partitioned into $k$ (possibly empty) sets $X_1, X_2, \dots, X_k$ such that, for every $i \in [k]$, if

    \begin{itemize}
        \item  $M[i,i] = 1$, then $X_i$ is a clique and if  $M[i,i] = 0$, then $X_i$ is an  independent set, and 
            \end{itemize}
    for every $i, j \in [k]$, $i\neq j$,  
    \begin{itemize}
        \item if $M[i,j] = 1$, then every vertex of $X_i$ is adjacent to all vertices of $X_j$, 
        \item if $M[i,j] = 0$, then there is no edges between $X_i$  and $X_j$.
    \end{itemize}
\end{definition}
\noindent Note that if $M[i,j] = \star$, then there is no restriction on the edges between vertices from $X_i$ and $X_j$.

For example, for matrix   
\[M=\left(\begin{array}{cc}0 & \star \\ \star & 0\end{array}\right)\]
the corresponding class of graphs is the class of bipartite graphs, while   matrix 
\[M=\left(\begin{array}{cc}0 & \star \\ \star& 1\end{array}\right)\]
identifies the class of split graphs. 

In this section we prove the following theorem. 

\begin{theorem}\label{thm:Mpartition} 
    Let $\mathcal{G}= \mathcal{G}_M$ be a graph class described by an $M$-partition matrix of size $2 \times 2$. Then \PCG{} is solvable in polynomial time. 
\end{theorem}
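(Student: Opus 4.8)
The plan is to eliminate the two unknowns — the complementation set $S$ and the partition $(X_1,X_2)$ witnessing membership of $G\oplus S$ in $\mathcal{G}_M$ — simultaneously, by encoding them as a single partition of $V(G)$ into four parts. Write $M=\left(\begin{smallmatrix} a & c\\ c& b\end{smallmatrix}\right)$ with $a,b,c\in\{0,1,\star\}$. For a candidate solution, classify every vertex by the pair (whether it lies in $S$, whether it lies in $X_1$ or $X_2$); this splits $V(G)$ into four parts $A=S\cap X_1$, $B=S\cap X_2$, $C=X_1\setminus S$, $D=X_2\setminus S$. The operation $\oplus S$ flips adjacency exactly on the pairs with both endpoints in $S$, that is, inside $A\cup B$, and leaves every other pair unchanged. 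Hence each constraint that $\mathcal{G}_M$ imposes on $G\oplus S$ becomes a constraint on $G$ between two of the four parts, and the required relation ($0$ = no edges, $1$ = all edges, $\star$ = free) is the corresponding entry of $M$, \emph{flipped} ($0\leftrightarrow 1$, $\star\mapsto\star$) precisely when both parts lie inside $S$.

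First I would carry out this translation explicitly to obtain a symmetric $4\times4$ matrix $M'$ over $\{0,1,\star\}$ on the parts $(A,B,C,D)$: the diagonal is $(\bar a,\bar b,a,b)$ and the off-diagonal entries are $M'[A,B]=\bar c$, $M'[A,C]=a$, $M'[B,D]=b$, and $M'[A,D]=M'[B,C]=M'[C,D]=c$ (bars denote the flip). This yields the exact equivalence: $G$ is a yes-instance of \PCG{} for $\mathcal{G}_M$ if and only if $G$ admits an $M'$-partition. Since $a,b,c$ range over a fixed finite set, only finitely many matrices $M'$ arise, so it suffices to prove that each resulting $M'$-partition problem is polynomial-time solvable.

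Second, I would exploit the strong structure of $M'$ to argue tractability. The parts regroup as $X_1=A\cup C$ and $X_2=B\cup D$, and whenever $a\in\{0,1\}$ the block on $X_1$ forces $G[X_1]$ to be either a disjoint union of a clique and an independent set ($a=0$) or the join of an independent set and a clique ($a=1$) — both highly constrained, self-complementary configurations in which the split $A/C$ is essentially determined; the analogous statement holds for $X_2$ via $b$. Note also that the diagonal of $M'$ always contains a complementary pair ($\bar a$ against $a$), so $M'$ never degenerates into a hard $k$-colouring matrix. Using that such clique-plus-independent-set decompositions can be enumerated in polynomially many ways (there are at most $\bigO(n)$ relevant split partitions, as recorded before Lemma~\ref{tf:lem:nonclique-solution-properties}), together with the tractability of small partition matrices in the sense of Feder et al.~\cite{feder2003list}, I would solve the partition problem and then verify membership by recognising $\mathcal{G}_M$ in polynomial time.

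The main obstacle is the cross entry $c$: when $c\in\{0,1\}$ it couples the two sides, imposing complete or empty relations between the parts across $X_1$ and $X_2$ (for instance $c=1$ forces $A$–$B$ empty and $A$–$D$, $B$–$C$, $C$–$D$ complete in $G$), so the two split-type decompositions cannot be chosen independently and must be made globally consistent. Handling this coupling — showing that the interaction of the forced bipartite relations with the per-side clique/independent structure still leaves a polynomial-time solvable (indeed $2$-SAT-like) assignment problem — is where the real work lies; the remaining case $c=\star$ decouples the sides and reduces to decomposing $G[X_1]$ and $G[X_2]$ separately, which the enumeration above settles.
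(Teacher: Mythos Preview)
Your reduction to a $4\times4$ $M'$-partition problem is exactly what the paper does (Lemma~\ref{lem:Mpart}), and your matrix $M'$ agrees with the paper's construction up to a permutation of the rows and columns. You also make the key structural observation the paper uses: the diagonal of $M'$ consists of complementary pairs $(\bar a,a)$ and $(\bar b,b)$, so it can never contain three $0$'s or three $1$'s.

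Where you diverge is in the final step. After the reduction, the paper simply invokes Proposition~\ref{mpart:prop:mpart44} (Feder et al.'s dichotomy for $4\times4$ matrices): the $M'$-partition problem is \classNP-hard only when $M'$ contains the $3$-colouring matrix or its complement as a principal submatrix and no diagonal entry is~$\star$. Your complementary-pair observation immediately rules this out, so $\mathcal{G}_{M'}$ is recognisable in polynomial time and the proof is finished.

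Your proposal instead tries to build a direct algorithm via split-like decompositions of $G[X_1]$ and $G[X_2]$, and identifies the cross entry $c$ as ``the main obstacle''. This is unnecessary given Feder et al.'s result, and as written it is also not quite an algorithm: you reason about the structure of $G[X_1]$ and $G[X_2]$, but $X_1$ and $X_2$ are themselves unknowns of the $4$-partition problem, so you cannot first fix them and then enumerate their split partitions. The coupling you worry about, and the cases $a=\star$ or $b=\star$ (where your split argument does not apply), are all absorbed by the black-box dichotomy. In short: keep your first two paragraphs, drop the rest, and cite Proposition~\ref{mpart:prop:mpart44} directly.
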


In particular, Theorem~\ref{thm:Mpartition} yields  
 polynomial algorithms for \PCG{} when $\mathcal{G}$ is the class of split graphs or (complete) bipartite graphs. The proof of our theorem is based on the following beautiful dichotomy result of Feder et al.~\cite{feder2003list} on the recognition of classes  $\mathcal{G}_M$ described by $4\times 4$ matrices.  

%
%
\begin{proposition}[{\cite[Corollary 6.3]{feder2003list}}] \label{mpart:prop:mpart44}
    Suppose $M$ is a symmetric matrix over $\{0, 1, \star\}$ of size $k = 4$. Then the recognition problem for $\mathcal{G}_M$ is 
    \begin{itemize}
        \item NP-complete when $M$ contains the matrix for 3-coloring or its complement, and no diagonal entry is $\star$.
        \item Polynomial time solvable otherwise.
    \end{itemize}
\end{proposition}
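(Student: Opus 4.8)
The plan is to prove the two directions of the dichotomy separately: \classNP-hardness for the matrices containing the $3$-coloring pattern or its complement with no $\star$ on the diagonal, and polynomial-time solvability for all remaining $4\times 4$ symmetric matrices over $\{0,1,\star\}$. A tool I would use throughout is the \emph{complementation symmetry}: if $\overline M$ is obtained from $M$ by exchanging $0$ and $1$ and fixing $\star$, then $G \in \mathcal{G}_M$ if and only if $\overline G \in \mathcal{G}_{\overline M}$, since cliques and independent sets, and complete and empty bipartite patterns, are swapped under complementation while $\star$-constraints are preserved. This symmetry pairs the $3$-coloring case with its complement, so it suffices to treat one representative of each pair.

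For the hardness direction, let $A$ denote the $3\times 3$ matrix with $0$'s on the diagonal and $\star$'s off the diagonal, so that $\mathcal{G}_A$ is exactly the class of $3$-colorable graphs, whose recognition is \classNP-complete. Suppose $M$ has a principal $3\times 3$ submatrix equal to $A$, carried by parts $X_1,X_2,X_3$, and that no diagonal entry of $M$ equals $\star$. I would reduce $3$-coloring to recognition of $\mathcal{G}_M$. The role of forbidding a diagonal $\star$ is decisive: it forces every part of any admissible partition to be a clique or an independent set, so that no single part can absorb an arbitrary induced subgraph. Attaching a suitable gadget to the input graph $G$ then guarantees that in every $M$-partition of the constructed graph $G'$ the three $A$-parts must host a proper $3$-coloring of $G$, while the fourth part can accommodate only gadget vertices; hence $G' \in \mathcal{G}_M$ if and only if $G$ is $3$-colorable. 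The case where $M$ contains $\overline A$ follows immediately through the complementation symmetry.

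The tractability direction is the main obstacle, and I would split it into two regimes. First, if some diagonal entry $M[i,i]=\star$, the part $X_i$ carries no internal constraint and acts as a wildcard; I would argue that this collapses the complexity, because once the behaviour of the three constrained parts is fixed the remaining vertices can be swept into $X_i$ subject only to the off-diagonal constraints, reducing the instance to a bounded number of simpler list-partition subproblems. Second, if no diagonal entry is $\star$ and $M$ contains neither $A$ nor $\overline A$, then by the $k=3$ dichotomy every $3\times 3$ principal submatrix of $M$ already avoids the only hard $3\times 3$ pattern, and I would enumerate the finitely many such matrices up to permutation of parts and the complementation symmetry, giving a polynomial algorithm for each.

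The uniform engine behind the tractable cases is a reduction to $2$-SAT. Using a sparse--dense decomposition of $G$ — of which there are only polynomially many that are compatible with the clique-parts and independent-set-parts prescribed by $M$, and which can be listed efficiently — I would fix a bounded skeleton of the partition, after which each remaining vertex has at most two admissible parts and every constraint imposed by a $0$ or a $1$ entry of $M$ becomes a $2$-clause, so the completion is decided by $2$-satisfiability. The hardest step, and where the absence of an $A$- or $\overline A$-submatrix is genuinely used, is proving that for every matrix escaping the hard pattern the choices can indeed be narrowed to two admissible parts per vertex; establishing this requires a careful matrix-by-matrix verification combined with the properties of the sparse--dense decomposition, and it is the technical core of the argument.
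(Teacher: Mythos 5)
You should first be aware that the paper contains no proof of this statement at all: it is imported verbatim as \cite[Corollary~6.3]{feder2003list} from Feder, Hell, Klein, and Motwani, so the only meaningful benchmark is the proof in that paper. Measured against it, your outline reconstructs the genuine architecture of their argument: the complementation symmetry $G\in\mathcal{G}_M \Leftrightarrow \overline{G}\in\mathcal{G}_{\overline{M}}$, \classNP-hardness driven by containment of the $3$-coloring block when the diagonal is $\star$-free, and tractability via sparse--dense decompositions combined with $2$-SAT completion. At the level of strategy you are on the right track, and your honesty about where the difficulty lies is accurate.

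But as a proof the proposal has concrete gaps, and one of its specific claims is wrong as stated. On the hardness side, the assertion that a gadget can force ``the fourth part to accommodate only gadget vertices'' is not achievable in general: take $M$ with diagonal $(0,0,0,1)$ and all off-diagonal entries $\star$, so that $\mathcal{G}_M$ is the class of graphs partitionable into three independent sets and a clique. No gadget can prevent the clique part from absorbing arbitrarily many input vertices; the correct reduction is of a different flavor (e.g., take two disjoint copies of $G$ -- a clique lies inside a single component, so the other copy must witness $3$-colorability). Since the fourth row of $M$ can carry many combinations of $0$, $1$, $\star$, a per-matrix construction is required and is entirely missing. On the tractability side, the two load-bearing claims -- that the sparse--dense partitions compatible with the clique-parts and independent-set-parts of $M$ are polynomially many and efficiently enumerable, and that after fixing a bounded skeleton every vertex retains at most two admissible parts so that the $0$/$1$ entries translate into $2$-clauses -- are precisely the theorems Feder et al.\ prove; you state them and explicitly defer the matrix-by-matrix verification that constitutes the technical core. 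The diagonal-$\star$ regime is likewise hand-waved: when $M[i,i]=\star$ but $X_i$ has $0$/$1$ constraints to other parts, the wildcard part is externally constrained and the remaining vertices cannot simply be ``swept into $X_i$''; these cases again need the list-partition/$2$-SAT machinery. In short, your proposal is a faithful roadmap of the known proof, but with the decisive gadget constructions and case analyses left open it does not yet constitute one.
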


\begin{lemma}\label{lem:Mpart}
    Let $M$ be a symmetric $k\times k$ matrix giving rise to the graph class $\mathcal{G}_M = \mathcal{G}$. Then there exists a $2k \times 2k$ matrix $M'$ such that for any input $G$ to \PCG{}, it is a yes-instance if and only if $G$ belongs to $\mathcal{G}_{M'}$.
\end{lemma}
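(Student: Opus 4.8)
The plan is to build $M'$ by ``doubling'' the parts of $M$: for each index $i \in [k]$ I will use two parts, one collecting the vertices of $X_i$ that lie inside the switching set $S$ and one collecting those outside. The guiding observation is that a partial complement $G \oplus S$ alters the adjacency of a pair $\{u,v\}$ exactly when both $u,v \in S$, and leaves it unchanged for every other pair. Hence certifying that $G \oplus S$ lies in $\mathcal{G}_M$ amounts to imposing on the original graph $G$ the very same partition constraints, except that inside the block where both parts live in $S$ every required edge must be read as a required non-edge and conversely.

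Concretely, first I would index the rows and columns of $M'$ by $[k] \times \{\mathrm{in}, \mathrm{out}\}$ and, writing $\overline{0} = 1$, $\overline{1} = 0$, $\overline{\star} = \star$ for complementation of entries, set
\[
M'[(i,a),(j,b)] = \begin{cases} \overline{M[i,j]} & \text{if } a = b = \mathrm{in},\\ M[i,j] & \text{otherwise.}\end{cases}
\]
Because $M$ is symmetric and the condition ``$a = b = \mathrm{in}$'' is symmetric in the two index pairs, $M'$ is symmetric as well, so it is a legitimate $M$-partition matrix of size $2k \times 2k$.

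Next I would prove the claimed equivalence in both directions via the bijection between a pair $(S, \{X_i\}_{i\in[k]})$ and the $2k$ parts $A_i := X_i \cap S$ and $B_i := X_i \setminus S$ (recovering $S = \bigcup_i A_i$ and $X_i = A_i \cup B_i$ in the reverse direction). Given this correspondence, I would verify by a short case analysis over $M[i,j] \in \{0,1,\star\}$ that the $M$-constraint on $X_i, X_j$ measured in $G \oplus S$ holds if and only if the $M'$-constraints on the four parts $A_i, B_i, A_j, B_j$ measured in $G$ hold. The point deserving attention is the interplay of the ``edges flip inside $S$'' rule with the meaning of $\star$: a $\star$ entry (``no constraint'') is preserved under complementation, which is precisely why $\overline{\star} = \star$ is the right convention and makes the diagonal and off-diagonal blocks line up.

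I expect the main, and essentially the only, obstacle to be bookkeeping: checking that each of the four blocks $A_iA_j$, $A_iB_j$, $B_iA_j$, $B_iB_j$ receives the correct entry, and in particular that the diagonal constraints (the case $i=j$, governing whether $X_i$ is a clique or an independent set in $G\oplus S$) are handled uniformly by the same rule as the off-diagonal ones. Once this verification is complete, $G$ is a yes-instance of \PCG{} exactly when some $S$ and some $M$-partition of $G \oplus S$ exist, which by the bijection is exactly when $G$ admits an $M'$-partition, i.e.\ $G \in \mathcal{G}_{M'}$.
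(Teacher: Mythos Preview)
Your proposal is correct and follows essentially the same approach as the paper: the paper also doubles each part into an ``in $S$'' half and an ``out of $S$'' half (indexed by odd versus even positions rather than your $\{\mathrm{in},\mathrm{out}\}$), defines $M'$ by the identical rule of negating the entry precisely when both indices are ``in'' (with the same convention $\overline{\star}=\star$), and then verifies the two directions via the same bijection between $(S,\{X_i\})$ and the refined partition $\{X_i\cap S,\, X_i\setminus S\}$. The only difference is cosmetic indexing.
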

\begin{proof}
    Given $M$, we construct a matrix $M'$ in linear time. We let $M'$ be a matrix of dimension $2k \times 2k$, where entry $M'[i,j]$ is defined as $M[\lceil\frac{i}{2}\rceil,\lceil\frac{j}{2}\rceil]$ if at least one of $i,j$ is even, and $\neg{M[\frac{i+1}{2},\frac{j+1}{2}]}$ if $i,j$ are both odd. Here, $\neg{1} = 0$, $\neg{0} = 1$, and $\neg{\star} = \star$.  For example, for matrix 
\[M=\left(\begin{array}{cc}0 & \star \\ \star& 1\end{array}\right)\]
    the  above  construction results in  
    \[ M'=  
    \left(\begin{array}{cccc}1 & 0 & \star & \star \\
                                        0 & 0 & \star & \star \\  
                                        \star & \star & 0 & 1 \\ 
                                        \star & \star & 1 & 1
                                        \end{array}\right) .\]

    We prove the two directions separately.
    
    ($\implies$) Assume there is a partial complementation $G\oplus S$ into $\mathcal{G}_M$. 
 Let $X_1, X_2, \dots, X_k$ be an $M$-partition of  $G\oplus S$. We define partition $X'_1, X'_2, \dots, X'_{2k}$ of $G$ as follows.
 For every  vertex $v \in X_i$, $1\leq i \leq k$, we assign $v$ to $X'_{2i-1}$ if $v\in S$ and to  $X'_{2i}$ otherwise.
 
    We now show that every edge   of $G$ respects the requirements of $M'$. Let $uv \in E(G)$ be an edge, and let 
    $u\in X_i$ and $v\in X_j$.  If at least one vertex from $\{u,v\}$, say $v$  is not in $S$, then $uv$ is also an edge in $G\oplus S$, thus $M[i,j] \neq 0$. 
    Since $v\not\in S$, it belongs to set $v\in X'_{2j}$. Vertex $u$ is assigned  to set $X'_{\ell}$, where $\ell$ is either $2i$ or $2i-1$, depending whether $u$ belongs to $S$ or not. But because $2j$ is even irrespectively of $\ell$, $M'[\ell, 2j]=M[i,j]\neq 0$. 
    
%
%

    Now consider the case when both $u,v \in S$. Then the edge does not persist after the partial complementation by $S$, and thus $M[i ,j ] \neq 1$. We further know that $u$ is assigned to   $X'_{2i-1}$ and  $v$ to  $X'_{2j-1}$.  Both $2i-1$ and $2j-1$ are odd, and 
 by the construction of $M'$, we have that  $M'[2i-1,2j-1] \neq 0$, and again the edge $uv$ respects $M'$. An analogous argument shows that also all non-edges respect $M'$.

    ($\impliedby$) Assume that there is a partition $X'_1, X'_2, \dots, X'_{2k}$ of $G$ according to $M'$. Let the set $S$ consist of all vertices in  odd-indexed parts of the partition. We now show that $G\oplus S$ can be partitioned according to $M$. We define partition 
   $X_1, X_2, \dots, X_k$ by assigning   
 each vertex $u\in X'_i$ to  $X_{\lceil\frac{i}{2}\rceil}$. It remains to show that 
 $X_1, X_2, \dots, X_k$ is an $M$-partition of 
  $G\oplus S$.

 Let $u\in X_i$, $v\in X_j$. 
   Suppose first that  $uv\in E(G\oplus S)$. 
    If at least one of $u,v$ is not in $S$, we assume without loss of generality that $v \notin S$. 
    Then $uv\in E(G)$ and   $v\in X'_{2j}$. For vertex $u\in X'_{\ell}$, irrespectively, whether $\ell$ is  $2i$ or $2i-1$, we have that  $M'[\ell, 2j]=M[i,j]\neq 0$. 
But then  $M[i,j] \neq 0$. 
    Otherwise we have $u,v \in S$. Then $uv$ is a non-edge in $G$, and thus  $M'[2i-1,2j-1] \neq 1$. But by the construction of $M'$, we have  that $M[i ,j] \neq 0$, and there is no violation of $M$. An analogous argument shows that if $u$ and $v$ are not adjacent in  $G\oplus S$, it holds that $M[i,j] \neq 1$. Thus 
    $X_1, X_2, \dots, X_k$ is an $M$-partition of 
  $G\oplus S$,  which concludes the proof.
\end{proof}

Now we are ready to prove Theorem~\ref{thm:Mpartition}.
\begin{proof}[Proof of Theorem~\ref{thm:Mpartition}]
For a given matrix $M$, we use  Lemma~\ref{lem:Mpart} to construct a matrix $M'$. Let us note that by the construction of matrix $M'$, for 
every $2 \times 2$ matrix $M$ we have that  matrix $M'$ has at most two $1$'s and at most two $0$'s along the diagonal. Then by Proposition~\ref{mpart:prop:mpart44}, the recognition of whether $G$ admits  $M'$-partition is in P. Thus by Lemma~\ref{lem:Mpart},  \PCG{} is solvable in polynomial time
\end{proof}



\section{Partial complementation to graph classes of bounded clique-width}

We show that \PCG{} can be solved in polynomial time when $\mathcal{G}$ has bounded clique-width and can be expressed by an \MSOone{} property. 
We refer to the book~\cite{Courcelle:2012:GSM:2414243} for the basic definitions.
  We will use the following result of Hlin{\v e}n{\'y} and Oum~\cite{HlinenyO08}.
 
 \begin{proposition}[\cite{HlinenyO08}] \label{prop:cliquewappro}     
 There is an algorithm that for every integer $k$ and graph $G$ in time  $O(|V(G)|^3)$ either  computes a $(2^{k+1}
- 1)$ expression for a graph $G$ or correctly concludes that the clique-width of $G$ is  more than $k$.
 \end{proposition}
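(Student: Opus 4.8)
The plan is to reduce the statement to the computation of an approximate \emph{rank-decomposition} of $G$, and then to invoke the branch-width machinery for symmetric submodular functions. First I would pass from clique-width to rank-width. Recall that the rank-width $\mathrm{rw}(G)$ is a width parameter introduced by Oum \cite{Oum05}, and that it is tied to clique-width by the inequalities $\mathrm{rw}(G) \le \mathrm{cw}(G) \le 2^{\mathrm{rw}(G)+1}-1$, where the right-hand inequality is \emph{constructive}: a rank-decomposition of width $t$ can be converted, in polynomial time, into a clique-width expression on at most $2^{t+1}-1$ labels. Consequently it suffices to exhibit a cubic-time algorithm that, given $G$ and $k$, either returns a rank-decomposition of width at most $k$ — which I convert into the required $(2^{k+1}-1)$-expression — or certifies that $\mathrm{rw}(G) > k$, in which case $\mathrm{cw}(G) \ge \mathrm{rw}(G) > k$ and I report that the clique-width exceeds $k$. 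This correspondence is exact in both directions, so the entire difficulty is concentrated in approximating rank-width.

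Second, I would set up the combinatorial object on which the recursion runs. For $X \subseteq V(G)$ let $\rho_G(X)$ denote the $\mathrm{GF}(2)$-rank of the $X \times (V(G)\setminus X)$ submatrix of the adjacency matrix. This \emph{cut-rank} function is symmetric, since $\rho_G(X) = \rho_G(V(G)\setminus X)$, is integer-valued and submodular, and is evaluable in polynomial time; rank-width is by definition the branch-width of $\rho_G$. The task therefore becomes to decide, and when possible construct, a branch-decomposition of width at most $k$ of this particular symmetric submodular function.

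The algorithmic core is a divide-and-conquer / dynamic-programming scheme over candidate branch-decompositions. The enabling observation is that the \emph{interface} of a partial decomposition of a vertex set $A$ with $\rho_G(A) \le k$ is captured by a bounded amount of data, essentially the image of the adjacencies of $A$ inside a $\mathrm{GF}(2)$-space of dimension at most $k$, considered up to a natural equivalence. Because the number of such interface types depends only on $k$, one can process the graph while keeping, for each reachable type, only a bounded family of \emph{representative} partial decompositions, and combine these when merging parts of the decomposition tree. Submodularity guarantees that ``uncrossing'' two cuts of small cut-rank produces no larger cut, which is what prevents the width from growing during these merges and lets the recursion assemble a decomposition of width $\le k$ whenever one exists; otherwise a highly connected obstruction (a tangle-like object) is exposed, certifying $\mathrm{rw}(G) > k$. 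This is exactly the branch-width framework of Oum and Seymour for submodular functions, specialized and accelerated for the cut-rank function as carried out in \cite{HlinenyO08}.

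The hard part — and the reason this is a theorem worth citing rather than a routine exercise — is twofold. I would first need to prove that finitely many interface types suffice and that, within each type, a bounded number of representatives may be retained without ever discarding a width-$\le k$ decomposition; this is an exchange/linkedness argument resting squarely on the submodularity of $\rho_G$. I would then need to show that this bookkeeping can be organized to run in time $O(|V(G)|^3)$, rather than merely polynomial for fixed $k$ as the generic submodular algorithm gives. Securing the cubic bound — by amortizing the cut-rank evaluations and bounding the recursion so that each vertex is touched only a constant (in $n$) number of times — is the delicate technical step, and is precisely the contribution established in \cite{HlinenyO08}.
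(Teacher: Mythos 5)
This proposition is not proved in the paper at all: it is imported verbatim as a black-box result of Hlin\v{e}n\'y and Oum \cite{HlinenyO08}, so the only meaningful comparison is against that cited work, and at the level of architecture your reconstruction is faithful. The chain $\mathrm{rw}(G)\le \cwd(G)\le 2^{\mathrm{rw}(G)+1}-1$, with the upper half constructive (due to Oum and Seymour), is indeed what produces both the $(2^{k+1}-1)$-expression and the certificate $\cwd(G)>k$, and the algorithmic content is indeed the construction of a rank-decomposition, i.e.\ a branch-decomposition of the symmetric submodular cut-rank function $\rho_G$. Two points deserve flagging. First, the mechanism in your third paragraph is a plausible gloss but not what \cite{HlinenyO08} actually does: their algorithm is two-phase, first running the Oum--Seymour fixed-parameter \emph{approximation} to obtain either a rank-decomposition of width $O(k)$ or a certificate that $\mathrm{rw}(G)>k$, and only then performing exact dynamic programming (via Hlin\v{e}n\'y-style parse trees for matroids over $\mathrm{GF}(2)$) on that approximate decomposition to decide $\mathrm{rw}(G)\le k$ and extract a witnessing decomposition. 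The submodular-uncrossing/tangle framework thus enters through the bootstrap phase, not as the top-level recursion over ``interface types'' that you describe. Second, the stated $O(|V(G)|^3)$ bound hides a multiplicative factor depending on $k$; the algorithm is fixed-parameter cubic, which is exactly what the paper needs, since Proposition~\ref{prop:cliquewappro} is invoked only for bounded $k$ in the proof of Theorem~\ref{thm:clique-width}. You are candid that the two genuinely hard lemmas (boundedly many representatives, and amortizing cut-rank evaluations to reach cubic time) are deferred to the citation, so what you have produced is an accurate roadmap rather than a self-contained proof --- which is the appropriate outcome for a quoted proposition, modulo the correction of the middle section noted above.
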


Note that  the algorithm of Hlin{\v e}n{\'y} and Oum 
only approximates the clique-width but does not provide an
algorithm to construct an optimal $k$-expression tree for a graph
$G$ of clique-width at most $k$. But this approximation is usually
sufficient for algorithmic purposes.


 Courcelle, Makowsky and Rotics~\cite{Courcelle2000} proved  that every graph property that can be expressed in \MSOone{} can be recognized in linear time for graphs of bounded clique-width when given a $k$-expression. 

\begin{proposition}[{\cite[{Theorem~4}]{Courcelle2000}}]\label{rw:prop:courcelle} Let $\mathcal{G}$ be some class of graphs of clique-width at most $k$ such that for each graph $G \in \mathcal{G}$, a corresponding $k$-expression
can be found in $\bigO(f(n, m))$ time. Then every \MSOone{} property on $\mathcal{G}$ can be recognized in time $\bigO(f(n, m)+n)$.
\end{proposition}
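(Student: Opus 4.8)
The plan is to reduce model-checking of a fixed \MSOone{} formula $\varphi$ on $G$ to running a fixed finite tree automaton on the $k$-expression of $G$, exploiting the classical correspondence between monadic second-order logic and tree automata. First I would fix the logical setting. A $k$-expression is a term over the operations of introducing a single labelled vertex, taking disjoint unions, relabelling all vertices carrying one label to another, and inserting all edges between two label classes. Such a term is naturally a finite labelled rooted tree $T$ (the \emph{expression tree}) whose leaves are in bijection with $V(G)$ and whose internal nodes are annotated by the operation they perform. The central observation is that $G$ together with its adjacency relation is \emph{MSO-interpretable} inside $T$: a vertex of $G$ is a leaf of $T$, a vertex set of $G$ is a set of leaves, and whether two leaves $x,y$ are adjacent in $G$ can be read off from the labels they carry at the lowest $\eta_{i,j}$-node above their least common ancestor, which is an MSO-definable condition on $T$. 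Consequently, from $\varphi$ and the fixed bound $k$ one obtains, by syntactic substitution along this interpretation, an MSO sentence $\psi$ over the class of $k$-labelled trees such that $T \models \psi$ if and only if $G \models \varphi$.

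Next I would invoke the Doner--Thatcher--Wright theorem: every MSO-definable property of finite labelled trees over a fixed finite alphabet is recognised by a finite bottom-up tree automaton $\mathcal{A}_\psi$. Because $\varphi$ and $k$ are fixed, both $\psi$ and $\mathcal{A}_\psi$ are objects of constant size, constructed once and for all independently of the input $G$. Intuitively, the state that $\mathcal{A}_\psi$ assigns to a node summarises, for the labelled subgraph generated at that node, exactly the finite amount of information about the $k$ label classes that suffices to decide $\psi$; there are only boundedly many such summaries, which is why the automaton is finite and the computation is in effect a bounded-state dynamic program over $T$.

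Finally I would assemble the algorithm and bound its running time. Given $G \in \mathcal{G}$, we first spend $\bigO(f(n,m))$ time to produce a $k$-expression, hence its expression tree $T$. After normalising $T$ so that only a bounded (in $k$) number of unary operations sit between consecutive union or leaf nodes, $T$ has $\bigO(n)$ nodes. Running the fixed automaton $\mathcal{A}_\psi$ bottom-up over $T$ then costs $\bigO(n)$ time, and $G \models \varphi$ holds exactly when the root state is accepting, giving the total bound $\bigO(f(n,m)+n)$. The main obstacle, and the only nonelementary part of the construction, is producing $\mathcal{A}_\psi$ from $\psi$: its size can be a tower of exponentials in the length of $\varphi$ and in $k$. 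This blow-up is harmless here precisely because $\varphi$ and $k$ are treated as fixed, so the cost is absorbed into the constant hidden by the $\bigO$-notation; the delicate points to verify carefully are the MSO-definability of adjacency inside $T$ and the linear bound on the size of the normalised expression tree.
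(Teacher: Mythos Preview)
The paper does not give its own proof of this proposition: it is quoted verbatim as Theorem~4 of Courcelle, Makowsky and Rotics~\cite{Courcelle2000} and used as a black box. There is therefore nothing in the paper to compare your argument against; your write-up is a self-contained sketch of the classical automata-theoretic proof, which is indeed the route taken in~\cite{Courcelle2000}.

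Your outline is sound and hits the three essential points: interpret $G$ in its expression tree, translate the \MSOone{} sentence along this interpretation, and invoke the equivalence of MSO on trees with finite tree automata. Two small remarks. First, your description of adjacency in $T$ is slightly imprecise: two leaves $x,y$ are adjacent in $G$ not because of ``the lowest $\eta_{i,j}$-node above their least common ancestor'', but because \emph{some} $\eta_{i,j}$-node above the least common ancestor sees $x$ with current label $i$ and $y$ with current label $j$ (or conversely), where ``current label'' tracks the relabel operations along the path from the leaf. This is still MSO-definable over $T$, so the argument goes through, but the quantifier structure is existential over $\eta$-nodes rather than pointing to a unique one. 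Second, the normalisation step that brings $|T|$ down to $\bigO(n)$ takes time linear in the size of the \emph{original} expression, not in $n$; this is harmless because that size is at most $\bigO(f(n,m))$ (it is produced in that time), so the cost is absorbed in the $f(n,m)$ term rather than the $n$ term of the final bound.
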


The nice property of graphs with bounded clique-width is that their partial complementation is also bounded. In particular, Kami{\'n}ski, Lozin, and Milani{\v c} in \cite{KAMINSKI20092747} observed that if $G$ is a graph of clique-width $k$, then any partial complementation of $G$ is of clique-width at most $g(k)$ for some computable function $g$. For  completeness, we provide a more accurate upper bound.


\begin{lemma}\label{lem:upper-cw}
Let $G$ be a graph,  $S\subseteq V(G)$. Then $\cwd(G\oplus S)\leq 3\cwd(G)$.
\end{lemma}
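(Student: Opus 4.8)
The plan is to take an optimal $k$-expression for $G$, with $k=\cwd(G)$, and to transform it into an expression for $G\oplus S$ that uses at most $3k$ labels. The guiding observation is that a vertex outside $S$ never has any of its adjacencies changed, so it can be processed exactly as in the expression for $G$, whereas the adjacencies inside $S$ are globally complemented, and it is this complementation that is the expensive operation for clique-width.

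Concretely, I would replace each label $i\in\{1,\dots,k\}$ of the original expression by three labels, tagging it with a status in $\{o,s,s'\}$: the status $o$ is used for the vertices not in $S$, while the two statuses $s,s'$ are reserved for the vertices of $S$. The tag never interferes with the relabellings $\rho_{i\to j}$, which I simply apply simultaneously to all three copies. A leaf creating a vertex $v$ with label $i$ becomes a leaf with label $(i,o)$ if $v\notin S$ and $(i,s)$ otherwise, so that at every node the set of vertices carrying a status in $\{s,s'\}$ is exactly $S$ restricted to the current subgraph.

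The two kinds of adjacency are then produced separately. Each operation $\eta_{i,j}$ of the original expression, which adds all edges between the current class $i$ and class $j$, is replaced by the five edge-additions that join the two classes whenever at least one endpoint has status $o$ (namely $(i,o)$ to each of $(j,o),(j,s),(j,s')$, and $(i,s),(i,s')$ to $(j,o)$); these reproduce exactly the edges of $G$ with at most one endpoint in $S$ and add no edge inside $S$. It remains to create, inside $S$, precisely the non-edges of $G[S]$, that is, to build $\overline{G[S]}$ on the $S$-vertices. Since clique-width does not increase under taking induced subgraphs, $\cwd(G[S])\le k$, and by the standard bound $\cwd(\overline{H})\le 2\,\cwd(H)$ for complements, the graph $\overline{G[S]}$ admits an expression on $2k$ labels, which I would realize on the two reserved status copies $s,s'$ while reusing the very tree structure of the original expression.

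The hard part, and the reason two extra copies are needed rather than one, is that clique-width operations only add edges and never delete them, whereas complementing inside $S$ turns $G$-edges into non-edges. This is exactly the difficulty resolved by the complement construction: at each disjoint-union node the two status copies let one branch's $S$-vertices be distinguished from the other's, so that the cross $S$-edges demanded by the complement can be inserted by $\eta$ operations between disjoint label ranges and then merged by relabelling, without ever having to remove an edge. The bookkeeping to verify is that the $o$-labels (carrying the faithful copy of $G$ outside $S$) and the $s,s'$-labels (carrying the complement on $S$) occupy disjoint ranges and therefore do not interfere, and that the invariant ``the built graph on the current vertex set equals $G_e\oplus(S\cap V(G_e))$'' is preserved at leaves, relabellings, split $\eta$'s and unions. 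Counting $k$ labels for status $o$ and $2k$ for statuses $s,s'$ then yields the claimed bound $\cwd(G\oplus S)\le 3\cwd(G)$.
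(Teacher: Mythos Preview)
Your overall strategy---split each label into a copy for $V(G)\setminus S$, and two copies for $S$ on which the $2k$-label complement construction is run---is exactly what the paper does (its labels $\beta_i,\alpha_i,\gamma_i$ are your $(i,o),(i,s),(i,s')$). However, the invariant you state cannot be maintained, and this is not a cosmetic issue.

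You claim that after processing a node $e$ the built graph equals $G_e\oplus(S\cap V(G_e))$. Take a join node $\eta_{i,j}$ with child $y$, and pick $u,v\in S\cap V(G_y)$ with labels $i$ and $j$. Assuming irredundancy, $uv\notin E(G_y)$, so by the invariant at $y$ the edge $uv$ is present in your built graph. Your five ``split'' joins touch no $s$--$s$ pair, so $uv$ stays present after processing $\eta_{i,j}$. But $uv\in E(G_x)$, hence $uv\notin E\bigl(G_x\oplus(S\cap V(G_x))\bigr)$: the invariant fails. In fact, if at every union you insert \emph{all} cross $S$--$S$ edges (which is what your local invariant forces), then at the root every pair of $S$-vertices is adjacent, and you have built $G$ with $G[S]$ replaced by a clique rather than by its complement. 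Clique-width operations only add edges, so there is no way to repair this afterwards.

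The paper fixes this by first passing to an \emph{irredundant} expression tree and then, at each union node with children $y,z$, inserting a cross edge between the $S$-class currently labelled $i$ on the $y$-side and the $S$-class currently labelled $j$ on the $z$-side \emph{only when these two classes are non-adjacent in the final graph $G$} (this is well defined: two such classes are either fully joined later or never joined). At a join $\eta_{i,j}$ the $\alpha_i$--$\alpha_j$ join is omitted, and this causes no loss precisely because those pairs were never given an edge at their LCA union. So the working invariant is not ``$G_e\oplus(S\cap V(G_e))$'' but rather: the $o$/$o$ and $o$/$s$ adjacencies agree with $G_e$, while inside $S$ the current edge set is $\{uv:u,v\in S\cap V(G_e),\ uv\notin E(G)\}$---a forward-looking condition referring to the whole of $G$, not to $G_e$. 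This is also the actual content of the Courcelle--Olariu complement bound you cite; it does reuse the same expression tree, but with this look-ahead at union nodes, not with the local invariant you wrote. Once you make this adjustment (and record the irredundancy hypothesis), your sketch becomes the paper's proof.
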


\begin{proof}
Let $\cwd(G)=k$.
To show the bound, it is more convenient to use expression trees instead of $k$-expressions.
An {\em expression tree} of a graph $G$ is a rooted tree $T$ with nodes of four types $i$, $\dot{\cup}$, $\eta$ and $\rho$:
\begin{itemize}
 \item \emph{Introduce nodes $i(v)$} are leaves of $T$  corresponding to initial $i$-graphs with vertices $v$  labeled by $i$.
 \item\emph{Union node $\dot{\cup}$} stands for a disjoint union of graphs associated with its children.
 \item \emph{Relabel node $\rho_{i\to j}$} has one child and is associated with the $k$-graph obtained by  applying of the relabeling operation to  the graph corresponding to its child.
 \item \emph{Join node $\eta_{i,j}$} has one child and is associated with the $k$-graph resulting by applying the  join operation to the graph corresponding to its child.
 \item The graph $G$ is isomorphic to the graph associated with the root of $T$ (with all labels removed).
\end{itemize}
The {\em width} of the tree $T$ is the number of different labels appearing in $T$. If $G$ is of clique-width $k$, then by parsing the corresponding $k$-expression, one can construct an expression tree of width $k$ and, vise versa, given an expression tree of width $k$, it is straightforward to construct a $k$-expression.
Throughout the proof we call the elements of $V(T)$ \emph{nodes} to distinguish them from the vertices of $G$.
 Given a node $x$ of an expression tree, $T_x$ denotes the subtree of $T$ rooted in $x$ and the graph $G_x$ represents the $k$-graph formed by $T_x$.

An expression tree $T$ is \emph{irredundant} if for any join node $\eta_{i,j}$, the vertices labeled by $i$ and $j$ are not adjacent in the graph associated with its child.  
It was shown by Courcelle and Olariu~\cite{CourcelleO00} that every expression tree $T$ of $G$ can be transformed into an irredundant expression tree $T'$ of the same width in time linear in the size of $T$. 

Let $T$ be an irredundant expression tree of $G$ with the width $k$ rooted in $r$. We construct the expression tree $T'$ for $G'=G\oplus S$ by modifying $T$. 

Recall that the vertices of the graphs $G_x$ for $x\in V(T)$ are labeled $1,\ldots,k$. We introduce three groups of distinct labels $\alpha_1,\ldots,\alpha_k$, $\beta_1,\ldots,\beta_k$ and $\gamma_1,\ldots,\gamma_k$. The labels $\alpha_1,\ldots,\alpha_k$ and $\beta_1,\ldots,\beta_k$ correspond the the labels $1,\ldots,k$ for the vertices in $S$ and $V(G)\setminus S$ respectively. The labels $\gamma_1,\ldots,\gamma_k$ are auxiliary.
Then for every node $x$ of $T$ we construct $T_x'$ using $T_x$ starting the process from the leaves. We denote by $G_x'$ the $k$-graph corresponding to the root $x$ of $T_x'$.

For every introduce node $i(v)$, we construct an introduce node $\alpha_i(v)$ if $v\in S$ and an introduce node $\beta_i(v)$ if $v\notin S$. Let $x$ be a non-leaf node of $T$ and assume that we already constructed the modified expression trees of the children of $x$. 

Let $x$ be a union node $\dot{\cup}$ of $T$ and let $y$ and $z$ be  its children.  

We construct $k$ relabel nodes $\rho_{\alpha_i,\gamma_i}$ for $i\in \{1,\ldots,k\}$ that form a path, make one end-node of the path adjacent to $y$ in $T_y'$ and make the other end-node denoted by $y'$ the root of $T_{y'}'$ constructed from $T_y'$. Notice that in the corresponding graph $G_{y'}'$ all the vertices of $S$ are now labeled by $\gamma_1,\ldots,\gamma_k$ instead of $\alpha_1,\ldots,\alpha_k$. 

Next, we construct a union node $\dot{\cup}$ denoted by $x^{(1)}$ with the children $y'$ and $z$. This way we construct the disjoint union of $G_{y'}'$ and $G_z'$. 

Notice that the vertices that are labeled by the same label in $G_y$ and $G_z$ are not adjacent in $G$. Respectively, we should make the vertices of $V(G_x)\cap S$ and $V(G_y)\cap S$ with the same label adjacent in $G'$. We achieve it by adding $k$ join nodes $\eta_{\alpha_i,\gamma_i}$ for $i\in \{1,\ldots,k\}$, forming a path  out of them and making one end-node of the path adjacent to $x^{(1)}$. We declare the other end-node of the path denoted by $x^{(2)}$ the new root. 

Observe now that for the set of vertices $Y_i$ of $G_y$ labeled $i$ and the set of vertices $Z_j$ of $G_z$ labeled by $j$ where  $i,j\in \{1,\ldots,k\}$ are distinct, it holds that  the vertices of $Y_i$ and $Z_j$ are either pairwise adjacent in $G$ or pairwise nonadjacent. Respectively, on this stage of construction we ensure that if the vertices of $Y_i$ are not adjacent to the vertices of $Z_j$, then the vertices of $Y_i\cap S$ and $Z_j\cap S$ are made adjacent in $G'$. To do it, for every two distinct $i,j\in\{1,\ldots,k\}$ such that the vertices of $Y_i$ and $Z_j$ are not adjacent in $G$, construct a new join node   $\eta_{\gamma_i,\alpha_j}$ and form a path with all these nodes whose one end-node is adjacent to $x^{(2)}$ and the other end-node $x^{(3)}$ is the new root (we assume that $x^{(3)}=x^{(2)}$ if have no new constricted nodes).

Finally, we add $k$ relabel nodes $\rho_{\gamma_i,\alpha_i}$ for $i\in \{1,\ldots,k\}$ that form a path, make one end-node of the path adjacent to $x^{(3)}$ and make the other end-node denoted by $x$ the root of the obtained $T_{x}'$. Clearly,  all the vertices of $S$ in $G_x'$ are labeled by  $\alpha_1,\ldots,\alpha_k$. 

Let $x$ be a relabel node $\rho_{i\to j}$ of $T$ and let $y$ be  its child.  We construct two relabel nodes $\rho_{\alpha_i\to \alpha_j}$ and $\rho_{\beta_i\to \beta_j}$ denoted by $x$ and $x'$ respectively. We make $x'$ the child of $x$ and we make  the root $y$ of $T_y'$ the child of $x'$. 

Now, let  $x$ be a join node $\eta_{i\to j}$ of $T$ and let $y$ be  its child. Recall that $T$ is irredundant, that is, the vertices labeled by $i$ and $j$ in $G_y$ are not adjacent. Clearly, we should avoid making adjacent the vertices in $S$ in the construction of $G'$. We do it by constructing three new join nodes  $\eta_{\alpha_i\to \beta_j}$, $\eta_{\alpha_j\to \beta_i}$ and $\eta_{\beta_i\to \beta_j}$ denoted by $x,x',x''$ respectively. We make $x'$ the child of $x$, $x''$ the child of $x'$ and the node $y$ of $T_y'$ is made the child of $x''$.

This completes the description of the construction of $T'$. Using standard inductive arguments, it is straightforward to verify that $G'$ is isomorphic to the graph associated with the root of $T'$, that is, $\cwd(G')\leq 3k$. 
\end{proof}


\begin{lemma} \label{rw:lem:msoone}
    Let $\varphi$ be an \MSOone{} property describing the graph class $\mathcal{G}$. Then there exists an \MSOone{} property $\phi$ describing the graph class $\mathcal{G}^{(1)}$ of size $|\phi| \in \bigO(|\varphi|)$.
\end{lemma}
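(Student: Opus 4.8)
The plan is to guess the complementation set with a single monadic existential quantifier and then rewrite $\varphi$ so that, interpreted over $G$, it talks about the edges of $G\oplus S$ instead of those of $G$. The key point that makes this possible in $\MSOone$ is that $V(G\oplus S)=V(G)$: the partial complement never changes the universe, so every individual and set quantifier of $\varphi$ may range over exactly the same domain when we evaluate it over $G$; only the interpretation of the binary adjacency relation is altered, and that altered relation is itself $\MSOone$-definable from $S$. Concretely, for a fixed set variable $S$ I define the quantifier-free gadget
\[
\mathrm{adj}_S(x,y)\;:=\;x\neq y\;\wedge\;\bigl[(\mathrm{adj}(x,y)\wedge\neg(x\in S\wedge y\in S))\;\vee\;(\neg\,\mathrm{adj}(x,y)\wedge x\in S\wedge y\in S)\bigr],
\]
which, by the definition of $G\oplus S$ given in the introduction, holds in $G$ exactly when $xy$ is an edge of $G\oplus S$. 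The guard $x\neq y$ is needed so that the non-edge branch does not create a spurious self-loop at a vertex of $S$ (recall $G$ and $G\oplus S$ are simple).

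Next I would define the rewriting $\varphi\mapsto\varphi_S$ that replaces every atomic subformula of the form $\mathrm{adj}(x,y)$ in $\varphi$ by $\mathrm{adj}_S(x,y)$, leaving equality atoms, Boolean connectives, and all vertex and set quantifiers untouched. I would then prove, by induction on the structure of $\varphi$, that for every graph $G$ and every $S\subseteq V(G)$,
\[
G\oplus S\models\varphi\quad\Longleftrightarrow\quad (G,S)\models\varphi_S,
\]
where on the right $S$ is supplied as the interpretation of the free set variable. The base case is precisely the correctness of the gadget $\mathrm{adj}_S$, together with the fact that equality and set-membership atoms are interpreted identically over the two structures (same universe, same set $S$); the inductive step for $\neg,\wedge,\vee$ and for the vertex and set quantifiers is immediate, again because the two structures share their universe. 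Finally I set $\phi:=\exists S\,\varphi_S$; by the equivalence above, $G\models\phi$ iff there is some $S$ with $G\oplus S\models\varphi$, that is, iff $G\in\mathcal{G}^{(1)}$.

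For the size bound, the rewriting replaces each occurrence of an edge atom by the fixed, constant-size formula $\mathrm{adj}_S$ and prepends a single quantifier, so $|\phi| \le |\varphi| + \bigO(1)\cdot(\text{number of edge atoms in }\varphi) = \bigO(|\varphi|)$, as required. I expect the only genuinely delicate points to be bookkeeping rather than mathematical: ensuring the self-loop guard $x\neq y$ is present so that the non-edge branch introduces no self-loops on $S$, and confirming that the substitution is well defined with respect to bound variables (one should assume, after an $\bigO(|\varphi|)$-time renaming pass, that the fresh variable $S$ does not clash with any variable occurring in $\varphi$). Once these are handled, the structural induction is routine and the linear size bound follows directly from the locality of the substitution.
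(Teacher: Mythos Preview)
Your proposal is correct and follows essentially the same approach as the paper: prepend an existential set quantifier for $S$ and syntactically replace each adjacency atom by the constant-size formula expressing adjacency in $G\oplus S$, then note the linear blow-up. Your write-up is in fact more careful than the paper's (you make the structural induction explicit, add the $x\neq y$ guard to avoid spurious self-loops, and flag the need to avoid variable capture with the fresh name $S$), but the underlying argument is identical.
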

\begin{proof}
    We will construct $\phi$ from $\varphi$ in the following way: We start by prepending $\exists S \subseteq V(G)$. Then for each assessment of the existence of an edge in $\varphi$, say $uv \in E(G)$, replace that term with $((u \notin S \lor v \notin S) \land uv \in E(G)) \lor (u \in S \land v \in S \land uv \notin E(G))$. Symmetrically, for each assessment of the non-existence of an edge $uv \notin E(G)$, replace that term with $((u \notin S \lor v \notin S) \land uv \notin E(G)) \lor (u \in S \land v \in S \land uv \in E(G))$.

    We observe that if $\varphi$ is satisfiable for some graph $G$, then for every $S \subseteq V(G)$, the partial complementation $G\oplus S$ will yield a satisfying assignment to $\phi$. Conversely, if $\phi$ is satisfiable for a graph $G$, then there exist some $S$ such that $\varphi$ is satisfied for $G \oplus S$. For the size, we note that each existence check for edges blows up by a constant factor.
\end{proof}

We are ready to prove the main result of this section. 
\begin{theorem}\label{thm:clique-width}
    Let $\mathcal{G}$ be a graph class expressible in \MSOone{} which has bounded clique-width. 
    Then \PCG{} is solvable in polynomial time.
\end{theorem}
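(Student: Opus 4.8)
The plan is to assemble the three ingredients that the excerpt has already prepared. The target class is $\mathcal{G}^{(1)}$, the class of graphs admitting a partial complement into $\mathcal{G}$. The key observation is that membership in $\mathcal{G}^{(1)}$ can be tested by (a) expressing $\mathcal{G}^{(1)}$ as an $\MSOone$ property, and (b) evaluating that property efficiently on the input graph using the fact that partial complementation only modestly increases clique-width.

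First I would invoke Lemma~\ref{rw:lem:msoone}. Since $\mathcal{G}$ is expressible in $\MSOone$, say by a formula $\varphi$, the lemma produces an $\MSOone$ formula $\phi$ of size $\bigO(|\varphi|)$ describing exactly $\mathcal{G}^{(1)}$. So deciding \PCG{} on input $G$ is equivalent to deciding whether $G \models \phi$.

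Next I would address the clique-width bound. Let $k$ be the bound on the clique-width of graphs in $\mathcal{G}$. The subtle point is that the input graph $G$ itself need not have bounded clique-width; what we know is that \emph{if} $G$ is a yes-instance, then some $G\oplus S$ lies in $\mathcal{G}$ and hence has clique-width at most $k$. Here Lemma~\ref{lem:upper-cw} runs in the useful direction: applying $\oplus S$ can increase clique-width by at most a factor of $3$, so applying $\oplus S$ to $G\oplus S$ (which recovers $G$) shows $\cwd(G)\leq 3k$ whenever $G\oplus S\in\mathcal{G}$. Thus every yes-instance $G$ has $\cwd(G)\leq 3k$. Concretely, I would run the Hlin{\v e}n{\'y}--Oum approximation algorithm of Proposition~\ref{prop:cliquewappro} with parameter $3k$ on $G$ in time $\bigO(|V(G)|^3)$: if it reports clique-width greater than $3k$, then $G$ cannot be a yes-instance and we answer ``no''; otherwise it returns a $(2^{3k+1}-1)$-expression for $G$.

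Finally, with a $k'$-expression for $G$ in hand (where $k' = 2^{3k+1}-1$ is a constant depending only on $\mathcal{G}$), I would apply Proposition~\ref{rw:prop:courcelle} to evaluate the $\MSOone$ formula $\phi$ on $G$ in linear time, and return the verdict. Correctness follows because $\phi$ defines $\mathcal{G}^{(1)}$ exactly, and the only instances we discard before the model-checking step are those of clique-width exceeding $3k$, which are necessarily no-instances by the argument above. Every step runs in polynomial time: the approximation is cubic, and the model-checking is linear for fixed $k'$ and $|\phi|$. I do not anticipate a genuine obstacle, since the machinery is all in place; the one point that needs care is the direction of the clique-width argument—ensuring that it is the \emph{input} graph $G$, not the unknown target $G\oplus S$, whose clique-width we bound and approximate, and doing so correctly hinges on the symmetry that $G = (G\oplus S)\oplus S$ together with the factor-$3$ bound of Lemma~\ref{lem:upper-cw}.
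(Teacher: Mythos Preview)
Your proposal is correct and follows essentially the same approach as the paper: approximate the clique-width of $G$ via Proposition~\ref{prop:cliquewappro} with parameter $3k$, reject if it exceeds $3k$ (justified by Lemma~\ref{lem:upper-cw} applied to $G=(G\oplus S)\oplus S$), and otherwise model-check the $\MSOone$ formula $\phi$ from Lemma~\ref{rw:lem:msoone} using Proposition~\ref{rw:prop:courcelle}. If anything, you are more explicit than the paper about the symmetry step needed to bound $\cwd(G)$ rather than $\cwd(G\oplus S)$.
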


\begin{proof}
  Let $\varphi$ be the \MSOone{} formula which describes $\mathcal{G}$, and let $G$ be an $n$-vertex input graph. 
 We apply    
 Proposition~\ref{prop:cliquewappro}  for $G$ and in   time $O(n^3)$ either obtain   a  $(2^{3k+1}- 1)$ expression for $G$ or conclude that the clique-width of $G$ is more than $3k$. 
In the latter case,
  by Lemma~\ref{lem:upper-cw},  $G$ cannot be partially complemented to 
 $\mathcal{G}$.
%
%
%
   
     We then obtain an \MSOone{} formula  $\phi$ from Lemma~\ref{rw:lem:msoone}, and apply  Proposition~\ref{rw:prop:courcelle}, which works in   time $f(k, \phi) \cdot n$ for some function $f$. In total, the runtime of the algorithm is  $f(k, \phi) \cdot n + n^3$.
\end{proof}

  We remark that if clique-width expression is provided along with the input graphs, and $\mathcal{G}$ can be expressed in \MSOone{}, then there is a linear time algorithm for \PCG. This follows directly from Lemma~\ref{rw:lem:msoone} and Proposition~\ref{rw:prop:courcelle}.

  Theorem~\ref{thm:clique-width} implies that for every class of graphs $\mathcal{G}$ of bounded clique-width characterized by a finite set of finite forbidden induced subgraphs, \eg $P_4$-free graphs (also known as cographs) or classes of graphs discussed in \cite{BlancheD0LPZ17}, the  \PCG  problem is solvable in polynomial time. However, Theorem~\ref{thm:clique-width}  does not imply that  \PCG is solvable in polynomial time for  $\mathcal{G}$ being of the class of graphs having clique-width at most $k$. This is because such a class   $\mathcal{G}$  cannot be described by  \MSOone{}. Interestingly, for the related class  $\mathcal{G}$   of graphs of bounded \emph{rank-width} (see~\cite{CourcelleO00} for the definition) at most $k$, the result of Oum and Courcelle  \cite{CourcelleO07} combined with Theorem~\ref{thm:clique-width}
implies that  \PCG is solvable in polynomial time.
  
%


 \section{Hardness of partial complementation  to r-regular graphs}\label{sec:regulargraphs}

\noindent
Let us remind that a graph $G$ is $r$-regular if all its vertices are of degree $r$. 
We consider the following restricted version of \PCG{}.

\defsimpleproblem{\PCRR{} (\PCRRshort{})}{A simple undirected graph $G$, a positive integer $r$.}{Does there exist a vertex set $S \subseteq V(G)$ such that $G \oplus S$ is $r$-regular?}

\noindent
In this section, we show that \PCRR{} is NP-complete by a reduction from \KRR{}. 

\defsimpleproblem{\KRR{} (\KRRshort{})}{A simple undirected graph $G$ which is $r$-regular, a positive integer $k$.}{Does $G$ contain a clique on $k$ vertices?}


\noindent We will need the following well-known proposition.
\begin{proposition}[{\cite{GareyJ79}}] \label{rr:prop:krr-hard}
    \KRR{} is NP-complete.
\end{proposition}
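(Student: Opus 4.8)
The plan is to establish membership in $\classNP$ and $\classNP$-hardness separately. Membership is immediate: a set of $k$ vertices is a polynomial-size certificate, and verifying both that it induces a clique and that the input graph is genuinely $r$-regular takes polynomial time. The substance is the hardness, which I would obtain by a polynomial-time reduction from the classical \textsc{Clique} problem on \emph{arbitrary} graphs, which is well known to be $\classNP$-complete. Given an instance $(G,k)$ of \textsc{Clique}, the goal is to produce an $r$-regular graph $H$ together with a target clique size so that $H$ contains a clique of the target size if and only if $G$ contains a clique of size $k$. Since \textsc{Clique} is polynomial-time solvable for $k \le 2$, I would first dispose of those cases by computing the answer directly and mapping to a fixed trivial regular yes- or no-instance; henceforth I assume $k \ge 3$, which is exactly what lets me confine large cliques to the original graph.

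The core of the reduction is a \emph{degree-regularizing gadget} that raises every vertex of $G$ to a common degree $r := \Delta(G)$ while introducing no new triangles, so that the original copy of $G$ is the only place where a clique of size at least $3$ can live. Concretely, I would attach to each vertex $v$ of $G$ exactly $\delta(v) = \Delta(G) - d_G(v)$ fresh private pendant vertices, which saturates every original vertex to degree $r$ and creates no triangle, since each pendant touches exactly one original vertex. The pendants (possibly augmented by a few dummy vertices for parity) then need their degrees raised from their current low value up to $r$; I would accomplish this by adding a triangle-free \emph{bipartite} regular graph on the padding set alone, taking care to place all pendants belonging to the same original vertex on the same side of the bipartition. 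Because the padding-internal edges form a bipartite graph and the only other incidences of padding vertices are their private pendant edges, no triangle can use a padding vertex: two padding neighbours of a common vertex always lie on a common bipartition side and are non-adjacent.

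With this in hand the equivalence follows cleanly. Any clique $K$ of $H$ with $|K| \ge 3$ cannot contain a padding vertex, by the triangle-freeness argument above, so $K \subseteq V(G)$ and induces a clique of $G$; conversely every clique of $G$ survives in $H$ untouched. Hence for $k \ge 3$, $H$ has a clique of size $k$ if and only if $G$ does, and $H$ is $r$-regular with $r = \Delta(G)$ and of size polynomial in $|V(G)|$, so the reduction runs in polynomial time. I expect the main obstacle to be the gadget bookkeeping in the second paragraph: one must verify that a triangle-free regular bipartite graph with the required (slightly non-uniform, because pendants and dummies start at different degrees) degree profile actually exists and can be built in polynomial size, which is a routine but careful application of a Gale--Ryser-style realizability argument together with the addition of a controlled number of dummy vertices to balance the two sides. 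Everything else — membership, the small-$k$ cases, and the clique-confinement argument — is then straightforward.
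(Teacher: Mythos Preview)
The paper does not actually prove this proposition: it is quoted as a classical fact with a bare citation to Garey and Johnson, and no argument is supplied. So there is no proof in the paper to compare your approach against.

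That said, your sketch is a sound route to the result. The clique-confinement argument is correct: every padding vertex has at most one neighbour in $V(G)$, and any two padding neighbours of a common original vertex lie on the same side of the bipartition, so no triangle of $H$ can use a padding vertex and hence every clique of size $k\ge 3$ in $H$ sits inside $G$. The one place you are right to flag is the existence of the auxiliary bipartite graph. The cleanest way to discharge it is to put \emph{all} pendants on one side $A$ (this automatically satisfies your same-side constraint) and only dummies on the other side $B$; then you need a biregular bipartite graph with $A$-degrees $r-1$ and $B$-degrees $r$, which exists as soon as $(r-1)\,|A|=r\,|B|$, $r\le |A|$ and $r-1\le |B|$. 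The only obstruction is the divisibility $r\mid |A|$, and this can be forced up front by replacing $G$ with $r$ disjoint copies of itself, which preserves the clique number for $k\ge 2$ and keeps the instance polynomial. With that small tweak the Gale--Ryser step becomes trivial and the reduction goes through.
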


\begin{theorem}\label{thm_NP_regular}
\PCRR 
is NP-complete. 
\end{theorem}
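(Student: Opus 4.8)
The plan is to establish membership in \classNP{} and then give a polynomial-time reduction from \KRR{}, which is \classNP-complete by Proposition~\ref{rr:prop:krr-hard}. Membership is immediate: a candidate set $S \subseteq V(G)$ is a polynomial-size certificate, and one verifies in polynomial time that every vertex of $G \oplus S$ has degree exactly $r$. The whole reduction rests on one degree identity: complementing $S$ leaves the degree of every vertex outside $S$ unchanged, while the degree of $v \in S$ changes by exactly $|S| - 1 - 2\,|N(v) \cap S|$. In particular, if $H$ is $r$-regular and $Q$ induces a clique of size $q$, then in $H \oplus Q$ each vertex of $Q$ drops to degree $r - (q-1)$ and all other vertices stay at $r$. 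Thus a clique produces a uniform degree deficit of $q-1$ on itself, and the task is to design a gadget that, when complemented \emph{together} with $Q$, injects exactly $q-1$ new edges into each clique vertex and is itself regularized.

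Construction. Given an $r$-regular graph $G$ on $n$ vertices and an integer $k$, I first dispose of the easy regimes, which are polynomial-time decidable and can be mapped to fixed yes/no-instances: if $r < k-1$ no $k$-clique exists, and if $r = k-1$ a $k$-clique is necessarily a $K_k$-component. So assume $r \ge k$. I build $G'$ from a disjoint copy of $G$ by adding a clique $B$ on $k-1$ new vertices with \emph{no} edges to $V(G)$, together with a padding set $P$ (disjoint from $G$ and $B$ and also non-adjacent to $V(G)$) chosen so that each vertex of $B$ has exactly $r-k$ neighbours in $P$ and each vertex of $P$ has degree exactly $r$ in $G'$; such a $P$ exists because regular and biregular graphs of the required degrees exist once $|P|$ is taken large enough, any handshake-parity issue being absorbed by an extra $r$-regular filler component. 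I set the target regularity to $r' = r$. The intended solution is $S = Q \cup B$ for $Q$ a $k$-clique of $G$; note that then each $b \in B$ has degree $(k-2)+(r-k) = r-2$ in $G'$.

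Correctness. In the forward direction, the degree identity gives immediately that $G' \oplus (Q \cup B)$ is $r$-regular: each $v \in Q$ loses its $k-1$ clique edges but gains $k-1$ edges to the non-adjacent set $B$, so stays at $r$; each $b \in B$, sitting at the deficit $r-2$, is pushed up by exactly $+2$ and lands at $r$; and all vertices of $V(G)\setminus Q$ and all of $P$ lie outside $S$ and keep degree $r$. For the backward direction, suppose $G' \oplus S$ is $r$-regular. A short degree argument fixes $r'=r$. Since every vertex of $B$ has degree $r-2 \neq r$ in $G'$ and vertices outside $S$ are unchanged, we must have $B \subseteq S$. The identity applied to any $b \in B$ then reads $r-2 + |S|-1 - 2\,|N(b)\cap S| = r$; assuming the padding forces $P \cap S = \emptyset$, we have $|N(b)\cap S| = k-2$ (its clique-neighbours inside $B$), which pins $|S| = 2k-1$ and hence $|S \cap V(G)| = k$. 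Finally, the identity applied to each $v \in S \cap V(G)$ — whose only $S$-neighbours lie in $S \cap V(G)$, since $v$ is non-adjacent to $B$ and $P$ — forces $|N_G(v) \cap (S \cap V(G))| = (|S|-1)/2 = k-1$, so $S \cap V(G)$ is a clique on $k$ vertices, and $G$ is a yes-instance of \KRR{}.

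Main obstacle. The only step demanding genuine care is the rigidity claim $P \cap S = \emptyset$: the padding must be built so that no regularizing $S$ can contain a padding vertex and thereby ``cheat'' the size-pinning, since without it the $B$-constraint no longer forces $|S| = 2k-1$ and spurious solutions (e.g.\ a larger regular induced subgraph of $G$) could arise. I expect this to be the crux of the proof. The plan is to choose $P$ so that the balance equation $|S| - 1 - 2\,|N(p)\cap S| = 0$ can never hold for a padding vertex $p$ under any feasible $S$ — for instance by spreading the $B$-to-$P$ edges so each $P$-vertex meets $B$ at most once and keeping the internal structure of $P$ of controlled local density. The remaining existence-and-parity checks for $P$, together with the handling of the degenerate regimes $r \le k-1$, are routine.
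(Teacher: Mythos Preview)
Your high-level strategy matches the paper's: take the disjoint union of $G$ with a gadget consisting of a clique $B$ on $k-1$ vertices, padded so that each $b\in B$ has degree $r-2$ and every padding vertex has degree $r$; then the intended solution is $S=Q\cup B$ for a $k$-clique $Q$. Your forward direction and the degree identity are correct and essentially identical to the paper's.

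The gap is exactly where you say it is, and you have not closed it. You write that forcing $P\cap S=\varnothing$ is ``the crux of the proof'' and offer only a plan (``spread the $B$-to-$P$ edges so each $P$-vertex meets $B$ at most once and keep the internal structure of $P$ of controlled local density''). That plan does not obviously work: the balance equation $|S|-1=2|N(p)\cap S|$ for a padding vertex $p$ would have to fail for \emph{every} value of $|S|$ simultaneously, and since each $p$ has degree $r$ inside $B\cup P$, it is not clear any ``local density'' condition rules this out. Without this step your backward direction collapses: you cannot conclude $|N(b)\cap S|=k-2$, hence cannot pin $|S|=2k-1$, hence cannot force $S\cap V(G)$ to be a $k$-clique.

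The paper solves precisely this by choosing a very specific padding: each $c_i\in K_{k-1}$ gets its own private copy of $K_{r,r}$, rewired so that $c_i$ has $r-k$ neighbours there and all padding vertices have degree $r$. The point is not to make it impossible for padding vertices to sit in $S$ \emph{a priori}; rather, the disjointness of the $c_i$-neighbourhoods outside $K_{k-1}$ gives, with $x=|N(c_i)\cap(S\setminus K_{k-1})|$ and $p=|S\setminus K_{k-1}|$, the two relations $p=k+2x$ and $p\ge (k-1)x$, forcing $x\le k/(k-3)$ and hence $x\in\{0,1\}$ for $k\ge 7$. The case $x=1$ is then killed by a second counting argument exploiting the bipartite structure of the $K_{r,r}$'s. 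Once $x=0$ one gets $|S|=2k-1$ and that $S\setminus K_{k-1}$ is a $k$-clique in $G'$; since the padding is bipartite it contains no triangle, so this clique must live in $G$. In short, the paper does not prove $S\cap P=\varnothing$ directly; it proves $|S|=2k-1$ first via the private-neighbourhood trick, then locates the clique. Your generic padding lacks the structure needed for either step.
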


\begin{proof}
\noindent
We begin by defining a gadget which we will use in the reduction. 
\begin{figure}[h]
    \centering
    \includegraphics[scale=.4]{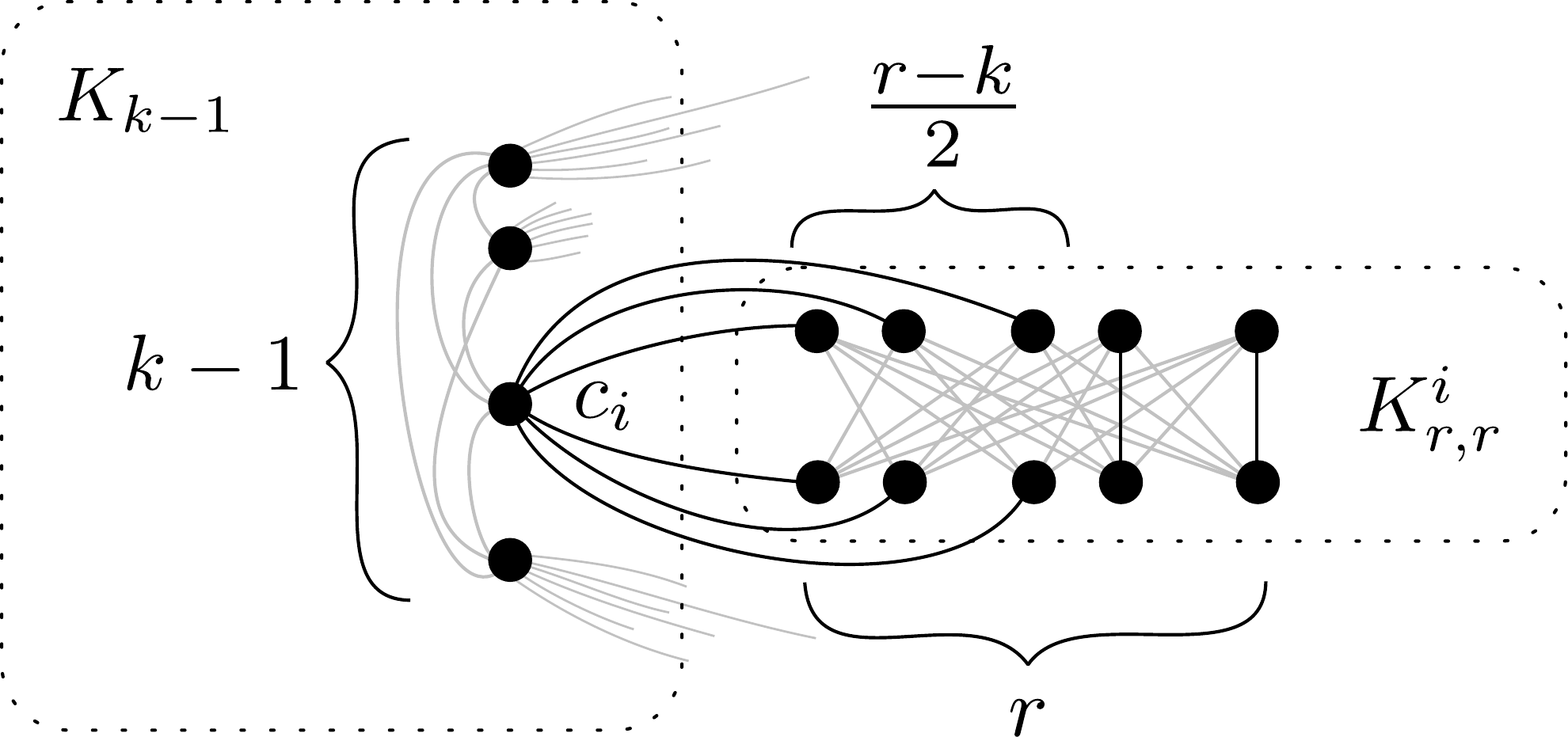}
    \caption{The graph $\textsc{gdg}_{k, r}$ is built of $k$ parts, namely a clique $K_{k-1}$, and $k-1$ complete bipartite graphs $K^1_{r,r}, \ldots, K^{k-1}_{r,r}$ with some rewiring.}\label{rr:fig:gdg}
\end{figure}
For integers $r>k$ such that $r-k$ is even, we build the graph $\textsc{gdg}_{k, r}$ as follows. Initially, we let $\textsc{gdg}_{k, r}$ consist of one clique on $k-1$ vertices, as well as $k-1$ distinct copies of $K_{r,r}$. These are all the vertices of the gadget, which is a total of $(k-1) + 2r \cdot (k-1)$ vertices. We denote the vertices of the clique $c_1, c_2, \ldots ,c_{k-1}$, and we let the complete bipartite graphs be denoted by $K^1_{r,r}, K^2_{r,r}, \ldots, K^{k-1}_{r,r}$. For a bipartite graph $K^i_{r,r}$, let the vertices of the two parts be denoted by $a^i_1, a^i_2, \ldots, a^i_r$ and $b^i_1, b^i_2, \ldots , b^i_r$ respectively.

We will now do some rewiring of the edges to complete the construction of $\textsc{gdg}_{k, r}$. Recall that $r-k$ is even and positive. For each vertex $c_i$ of the clique, add one edge from $c_i$ to each of $a^i_1, a^i_2, \ldots ,a^i_{\frac{r-k}{2}}$. Similarly, add an edge from $c_i$ to each of $b^i_1, b^i_2, \ldots ,b^i_{\frac{r-k}{2}}$. Now remove the edges $a^i_1b^i_1, a^i_2b^i_2, \ldots ,a^i_{\frac{r-k}{2}}b^i_{\frac{r-k}{2}}$. Once this is done for every $i \in [k-1]$, the construction is complete. See Figure~\ref{rr:fig:gdg}.

We observe the following property of vertices $a^i_j$, $b^i_j$, and $c_i$ of $\textsc{gdg}_{k, r}$.
\begin{observation} For every $i \in [k-1]$ and $j\in [r]$, it holds that the degrees of $a^i_j$ and $b^i_j$ in $\textsc{gdg}_{k, r}$ are both exactly $r$, whereas the degree of $c_i$ is $r-1$.
\end{observation}

\noindent
We are now ready to prove that \KRR{} is many-one reducible to
\PCRR{}.

\begin{algo}[Reduction \KRRshort{} to \PCRRshort{}] \label{rr:alg:reduction} 
$ $ \newline
    Input: An instance $(G, k)$ of \KRRshort{}.  \newline
 Output: An instance $(G', r)$ of \PCRRshort{} such that it is a yes-instance if and only if $(G, k)$ is a yes-instance of \KRRshort{}.
    \begin{enumerate}
        \item If $k < 7$ or $k \geq r$, solve the instance of \KRRshort{} by brute force. If it is a yes-instance, return a trivial yes-instance to \PCRRshort{}, if it is a no-instance, return a trivial no-instance to \PCRRshort{}.
        \item If $r-k$ is odd, modify $G$ by taking two copies of $G$ which are joined by a perfect matching between corresponding vertices. Then $r$ increase by one, whereas $k$ remains the same.
        \item Construct the graph $G'$ by taking the disjoint union of $G$ and the gadget $\textsc{gdg}_{k, r}$. Here, $r$ denotes the regularity of $G$ after step 2 is performed. Return $(G', r)$.

    \end{enumerate}
 \end{algo}

\noindent
Let $n=|V(G)|$.
We observe that the number of vertices in the returned instance is at most $2n + (k-1) + 2r \cdot (k-1)$, which is $\bigO(n^2)$. The running time of the algorithm is $\bigO(n^7)$ and thus is polynomial.

The correction of the reduction follows from the following two lemmata.

\begin{lemma}\label{lem_NPhardness1}
    Let $(G, k)$ be the input of Algorithm~\ref{rr:alg:reduction}, and let $(G', r)$ be the returned result. If $(G, k)$ is a yes-instance to \KRR{}, then $(G', r)$ is a yes-instance of \PCRR{}.
\end{lemma}

\begin{proof}
    Let $C \subseteq V(G)$ be a clique of size $k$ in $G$. If the clique is found in step 1, then $(G', r)$ is a trivial yes-instance, so the claim holds. Thus, we can assume that the graph $G'$ was constructed in step 3. If $G$ was altered in step 2, we let $C$ be the clique in one of the two copies that was created. Let $S \subseteq V(G')$ consist of the vertices of $C$ as well as the vertices of the clique $K_{k-1}$ of the gadget $\textsc{gdg}_{k, r}$. We claim that $S$ is a valid solution to $(G', r)$.

    We show that $G' \oplus S$ is $r$-regular. Any vertex not in $S$ will have the same number of neighbors as it had in $G'$. Since the only vertices that weren't originally of degree $r$ were those in the clique $K_{k-1}$, all vertices outside $S$ also have degree $r$ in $G' \oplus S$. What remains is to examine the degrees of vertices of $C$ and of $K_{k-1}$.

    Let $c_i$ be a vertex of $K_{k-1}$ in $G'$. Then $c_i$ lost its $k-2$ neighbors from $K_{k-1}$, gained $k$ neighbors from $C$, and kept $r-k$ neighbors in $K^i_{r,r}$. We see that its new neighborhood has size $k + r - k = r$.

    Let $u \in C$ be a vertex of the clique from $G$. Then $u$ lost $k-1$ neighbors from $C$, gained $k-1$ neighbors from $K_{k-1}$, and kept $r-(k-1)$ neighbors from $G - C$. In total, $u$ will have $r-(k-1)+(k-1) = r$ neighbors in $G' \oplus S$. Since every vertex of $G' \oplus S$ has degree $r$, it is $r$-regular, and thus $(G', r)$ is a yes-instance.
\end{proof}

\begin{lemma}\label{lem_NPhardness2}
    Let $(G, k)$ be the input of Algorithm~\ref{rr:alg:reduction}, and let $(G', r)$ be the returned result. If $(G', r)$ is a yes-instance to \PCRR{}, then $(G, k)$ is a yes-instance of \KRR{}.
\end{lemma}

\begin{proof}
    Let $S \subseteq V(G')$ be a solution witnessing that $(G', r)$ is a yes-instance. If $(G', r)$ was the trivial yes-instance returned in step 1 of Algorithm~\ref{rr:alg:reduction}, the statement trivially holds. Going forward we may thus assume $(G', r)$ was returned in step 3, and that $k \geq 7$.

    Because $G' \oplus S$ is $r$-regular, it must be the case that every vertex of $K_{k-1}$ is in $S$, since by construction these are the vertices which do not have degree $r$ in $G'$.

    We claim that $|S| = 2k - 1$, and moreover, that no neighbor of $K_{k-1}$ is in $S$. To show this, we let $p = |S \setminus K_{k-1}|$, and proceed to show that $p = k$. Towards this end, consider a vertex $c_i \in K_{k-1}$. This vertex has some number of neighbors in $S\setminus K_{k-1}$, denoted $x_i = |N_{G'}(c_i) \cap (S \setminus K_{k-1})|$. We know that $c_i$ has $r$ neighbors in $G' \oplus S$. Let us count them: Some neighbors are preserved by the partial complementation, namely $r-k-x_i$ of its neighbors found in $K^i_{r,r}$. Some neighbors are gained, namely $p-x_i$ of the vertices in $S$. Thus, we have that $r = r-k-x_i+p-x_i$. The $r$'s cancel, and we get $0 = p - k - 2x_i$. This is true for every $i \in [k-1]$, so we simply denote the number by $x = x_i$, and get $p = k + 2x$.

    Towards the claim, it remains to show that $x = 0$. Because the neighborhoods of distinct $c_i$ and $c_j$ are disjoint outside $K_{k-1}$, we get that $p \geq (k-1)\cdot x$. We substitute $p$, and get
    $$k+2x \geq (k-1) \cdot x $$ 
    $$k \geq (k-3)\cdot x$$
    $$\frac{k}{k-3} \geq x$$
    Recalling that $k \geq 7$, we have that $x$ is either $1$ or $0$. Assume for the sake of contradiction that $x = 1$. Then without loss of generality, each $c_i$ has some neighbor $a^i_j$ which is in $S$. Since $a^i_j$ had degree $r$ in $G'$, it must hold that $a^i_j$ has equally many neighbors as non-neighbors in $S$. At most one of $a^i_j$'s neighbors is outside of $K^i_{r,r}$, this means that at least $\frac{|S|-3}{2}$ vertices of $K^i_{r,r}$ are in $S$. Because $k \geq 7$ and the $K^i_{r,r}$'s are completely disjoint for different values of $i \in [k-1]$, we get that
    $$|S| \geq \frac{|S|-3}{2} \cdot (k-1) \geq \frac{|S|-3}{2} \cdot 6$$
    $$|S| \geq 3\cdot|S| - 9$$
    $$9 \geq 2\cdot|S|$$

    Seeing that $|S| \geq k-1 \geq 6$, this is a contradiction. Thus, $x$ must be $0$, so $p = k + 2x = k$ and the claim holds.

  We now show that $S \setminus K_{k-1}$ is a clique in $G'$. Assume for the sake of contradiction it is not, and let $u,v \in S \setminus K_{k-1}$ be vertices such that $uv \notin E(G')$. Consider the vertex $u$. By the above claim we know that $u$ does not have a neighbor in $K_{k-1}$. It will thus gain at least $k$ edges going to $K_{k-1} \cup \{v\}$, and lose at most $k-2$ edges going to $S \setminus (K_{k-1} \cup \{u, v\})$. Because $u$ was of degree $r$ in $G'$ yet gained more edges than it lost by the partial complementation, its degree is strictly greater than $r$ in $G \oplus S$. This is a contradiction, hence $S \setminus K_{k-1}$ is a clique in $G'$.

    Because $k \geq 3$, the clique $S \setminus K_{k-1}$ can not be contained in the gadget $\textsc{gdg}_{k, r}$ nor span across both copies of $G$ created in step 2 of the reduction (if that step was applied). It must therefore be contained in the original $G$. Thus, $G$ has a clique of size $k$, and $(G, k)$ is a yes-instance of \KRR{}.
\end{proof}

 Lemmata~\ref{lem_NPhardness1} and ~\ref{lem_NPhardness2} together with Proposition~\ref{rr:prop:krr-hard} conclude the proof of NP-hardness. Membership in NP is trivial, so NP-completeness holds. 
 \end{proof}

\noindent We remark that if $r$ is a constant not given with the input, the problem becomes polynomial time solvable by Theorem~\ref{thm:ddeg}.


 
\section{Conclusion and open problems}\label{secconcl_openpr}
In this paper we initiated the study of \PCG. Many interesting questions remain open. In particular, what is the complexity of the problem when $\mathcal{G}$ is 
%
%
%
%

\begin{itemize}
\item the class of chordal graphs, 
\item the class of interval graphs,
\item the class of graph excluding a path $P_5$ as an induced subgraph,  
\item  the class graphs with max degree $\leq r$, or 
\item the class of graphs with min degree $\geq r$
\end{itemize}

More broadly, it is also interesting to see what happens as we allow more than one partial complementation; how quickly can we recognize the class $\mathcal{G}^{(k)}$ for some class $\mathcal{G}$? It will also be interesting to investigate what happens if we combine partial complementation with other graph modifications, such as the Seidel switch.


\medskip\noindent\textbf{Acknowledgment} We thank Saket Saurabh for helpful discussions, and also a great thanks to the anonymous reviewers who provided valuable feedback.

{{
\bibliography{Complement}
}}




\appendix

\section{Appendix}
\medskip
\noindent

\paragraph*{Clique-width} Let $G$ be a graph and $k$ be a positive integer.
A \emph{$k$-graph} is a graph whose vertices are labeled by
integers from $\{1,2,\dots,k\}$. We call the $k$-graph consisting
of exactly one vertex labeled by some integer from
$\{1,2,\dots,k\}$ an initial $k$-graph. The \emph{clique-width} of $G$, denoted by $\cwd(G)$,
  is the smallest integer $k$ such that $G$ can be
constructed by means of repeated application of the following four
operations on $k$-graphs: {\em $(1)$ introduce}: construction of
an initial $k$-graph labeled by $i$ and denoted by $i(v)$ (that
is, $i(v)$ is a $k$-graph with $v$ as a single vertex and label
$i$), {\em $(2)$ disjoint union} (denoted by $\dot{\cup}$), {\em $(3)$
relabel}: changing all labels $i$ to $j$ (denoted by $\rho_{i\to
j}$), and {\em $(4)$ join}: connecting all vertices labeled by $i$
with all vertices labeled by $j$ by edges (denoted by
$\eta_{i,j}$). Using the symbols of these operations, we can
construct well-formed expressions.  An expression is called
\emph{$k$-expression} for $G$ if the graph produced by performing
these operations, in the order defined by the expression, is
isomorphic to $G$ when labels are removed, and the clique-width of $G$   is the
minimum $k$ such that there is a $k$-expression for $G$.

For integer $k$, we say that a graph class $\mathcal{G}$ is of clique-with at most $k$, if the clique-width of every graph in $\mathcal{G}$ is at most $k$. We also say that a graph class $\mathcal{G}$  is of \emph{bounded clique-width}, if there is a $k$ such that  $\mathcal{G}$ is of clique-with at most $k$.

\paragraph*{Monadic Second Order Logic.} 
\MSOone{} is the sublogic of \MSOtwo{} (Monadic Second Order Logic) without quantifications over edge subsets.  More precisely, 
The syntax of  \MSOone{} of graphs includes the logical connectives $\vee,$ $\land,$ $\neg,$ 
$\Leftrightarrow ,$  $\Rightarrow,$ variables for 
vertices, edges and  sets of vertices, the quantifiers $\forall,$ $\exists$ that can be applied 
to these variables, and the following five binary relations: 
 
\begin{enumerate}

\item 
$u\in U$ where $u$ is a vertex variable 
and $U$ is a vertex set variable; 
\item 
 $\mathbf{inc}(d,u),$ where $d$ is an edge variable,  $u$ is a vertex variable, and the interpretation 
is that the edge $d$ is incident with the vertex $u$; 
\item 
 $\mathbf{adj}(u,v),$ where  $u$ and $v$ are 
vertex variables  and the interpretation is that $u$ and $v$ are adjacent; \item 
 equality of variables representing vertices, edges, sets of vertices, and sets of edges.
\end{enumerate}

 We refer \cite{Courcelle:2012:GSM:2414243} for more information on \MSOone{} and \MSOtwo.
 


\end{document}